\newtheorem{example}{Example}
\newtheorem{problem}{Problem}
\newtheorem{definition}{Definition}
\newtheorem{theorem}{Theorem}
\newcommand{\sysname}{Shrinkwrap\xspace}
\newcommand{\federation}{private data federation\xspace}
\newcommand{\hb}{query coordinator\xspace}
\newcommand{\mpc}{secure computation\xspace}
\newcommand{\epsilonper}{\epsilon_{1\rightarrow l}}
\newcommand{\deltaper}{\delta_{1\rightarrow l}}
\newcommand{\carquery}{{\bf c}}
\newcommand{\tlap}{\text{TLap}}
\newcommand{\stitle}[1]{\smallskip \noindent{\bf #1}}
\newcommand{\eat}[1]{}
\newcommand{\change}[1]{{\color{black} #1}}
\newcommand{\am}[1]{\emph{{\color{red} [[AM: #1]]}}}
\newcommand{\todo}[1]{\noindent{\textcolor{red}{{\textbf{[TODO]}: #1}}}}
\newcommand{\cut}[1]{}
\newcommand{\squishlist}{
	\begin{list}{$\bullet$}
		{
			\setlength{\itemsep}{0pt}
			\setlength{\parsep}{3pt}
			\setlength{\topsep}{3pt}
			\setlength{\partopsep}{0pt}
			\setlength{\leftmargin}{1.5em}
			\setlength{\labelwidth}{1em}
			\setlength{\labelsep}{0.5em} } }
	\newcommand{\squishend}{
\end{list}  }
\def\@copyrightspace{\relax}
\def\@mkbibcitation{\relax}
\begin{document}


\title{\sysname: Efficient SQL Query Processing in Differentially Private Data Federations}




%
%
%
%

\numberofauthors{5} 

\author{
%
%
\alignauthor Johes Bater\\
\affaddr{Northwestern University}\\
%
\alignauthor Xi He\\
\affaddr{Duke University}\\
%
\alignauthor William Ehrich\\
\affaddr{Northwestern University}\\
%
\and
\alignauthor Ashwin Machanavajjhala\\
\affaddr{Duke University}\\
%
\alignauthor Jennie Rogers\\
\affaddr{Northwestern University}\\
}


\maketitle

\begin{abstract}
A private data federation is a set of autonomous databases that share a unified query interface offering \textit{in-situ} evaluation of SQL queries over the union of the sensitive data of its members. Owing to privacy concerns, these systems do not have a trusted data collector that can see all their data and their member databases cannot learn about individual records of other engines. Federations currently achieve this goal by evaluating queries obliviously using secure multiparty computation.  This hides the intermediate result cardinality of each query operator by exhaustively padding it. With cascades of such operators, this padding accumulates to a blow-up in the output size of each operator and a proportional loss in query performance. Hence, existing private data federations do not scale well to complex SQL queries over large datasets. 

We introduce \sysname, a private data federation that  offers data owners a differentially private view of the data held by others to improve their performance over oblivious query processing. \sysname uses computational differential privacy to minimize the padding of intermediate query results, achieving up to 35X performance improvement over oblivious query processing. When the query needs differentially private output, \sysname provides a trade-off between result accuracy and  query evaluation performance. 
\end{abstract}

\section{Introduction}\label{sec:intro}
The storage and analysis of data has seen dramatic growth in recent years. Organizations have never valued data more highly. Unfortunately, this newfound value has attracted unwanted attention. Security and privacy breaches litter the news headlines, and have engendered fear and a hesitance to share data, even among trusted collaborators. Without data sharing, information becomes siloed and enormous potential analytical value is lost.
 
\change{Recent work in databases and cryptography attempts to solve the data sharing problem by introducing the \textit{\federation}~\cite{Bater2017}. A \federation consists of a set of \cut{honest but curious} data owners who support a common relational database schema. Each party holds a horizontal partition (i.e., a subset of rows) of each table in the database. A \federation provides a SQL query interface to analysts (clients) over the union of the records held by the data owners. Query evaluation is performed securely from multiple data owners without revealing unauthorized information to any party involved in the query without the assistance of a trusted data curator.

A {\federation} must provably ensure the following guarantee: a data owner should not be able to reconstruct the database (or a part of it) held by other data owners based on the intermediate result sizes of the query evaluation. One way to achieve this without a trusted data curator is to use \mpc protocols, which provide a strong privacy guarantee that intermediate results leak \emph{no information} about their inputs. 

%

Current implementations of {\federation}s that use \mpc have a performance problem: state-of-the-art systems have slowdowns of \textit{4--6 orders of magnitude} over non-private data federations~\cite{Bater2017,Zheng2017}.
The source of this slowdown lies in the \mpc protocols used by {\federation}s that permit the secure evaluation of a SQL query without the need of trusted third party. A SQL query is executed as a directed acyclic graph of database operators. Each operator within the tree takes one or two inputs, applies a function, and outputs its result to the next operator. In a typical (distributed) database engine, the execution time needed to calculate each intermediate result and the size of that result leaks information about the underlying data to curious data owners participating in the \mpc. To address this leakage, {\federation}s insert dummy tuples to pad intermediate results to their maximum possible size thereby ensuring that execution time is independent of the input data. \cut{We discuss the details of our implementation of this in Section~\ref{sec:design}.} With secure evaluation, query execution effectively runs in the worst-case, drastically increasing its computation costs as intermediate result sizes grow. While performance is reasonable for simple queries and small workloads, performance is untenable for complex SQL queries with multiple joins.}

\change{Several approaches attempt to solve this performance problem, but they fail to provably bound the information leaked to a data owner.} One line of research uses Trusted Execution Environments (TEEs) that evaluate relational operators within an on-chip secure hardware enclave~\cite{Xu2018,Zheng2017}. TEEs efficiently protect query execution, but they require specialized hardware from chip manufacturers. Moreover, current TEE implementations from both Intel and AMD do not adequately obscure computation, allowing observers to obtain supposedly secure data through widely publicized hardware vulnerabilities~\cite{Chen2018,Ferraiuolo2017,Kocher2018,Lipp2016,VanBulck2018}. Another approach selectively applies homomorphic encryption to evaluate relational operators in query trees, while keeping the underlying tuples encrypted throughout the computation (e.g., CryptDB~\cite{Popa2011}). While this system improves performance, it leaks too much information about the data, such as statistics and memory access patterns, allowing a curious observer to deduce information about the true values of encrypted tuples~\cite{safeHarbor,Islam2012,naveed2015inference,taoInference18}.


\change{In this work, we bridge the gap between provable secure systems with untenable performance and practical systems with no provable guarantees on leakage using \emph{differential privacy} \cite{pdtextbook14}}. Differential privacy is a state-of-the-art technique to ensure privacy, and provides a provable guarantee that one can not reconstruct records in a database based on outputs of a differentially private algorithm. Differentially private algorithms, nevertheless, permit approximate aggregate statistics about the dataset to be revealed.

We present \sysname, a system that improves \federation performance by carefully relaxing the privacy guaranteed for data owners in terms of differential privacy. Instead of exhaustively padding intermediate results to their worst-case sizes, \sysname obliviously eliminates dummy tuples according to tunable privacy parameters, reducing each intermediate cardinality to a new, differentially private value. The differentially private intermediate result sizes are close to the true sizes, and thus, \sysname achieves practical query performance. 

\cut{Using differentially private intermediate sizes strongly, and non-linearly, affects the runtime of subsequent operators. Hence, we develop a novel I$/$O cost model to approximate the cost of the computation overhead and capture this size-dependent cost behavior. When carrying out the cost estimation and query optimization, our system automatically tunes the privacy parameters for each intermediate cardinality according to our cost model, maximizing performance and optimizing the differential privacy guarantees for each operator.

When the final query results themselves are differentially private, \sysname  adjusts the guarantees used during query execution to provide tighter accuracy bounds for the query output, trading off query evaluation performance for final result accuracy. In lieu of the all-or-nothing approach to privacy of current {\federation}s, we tune privacy parameters to control the amount of noise that we add and the trade-off we make between privacy, accuracy, and performance.}

\change{
To the best of our knowledge, \sysname is the first system for private data sharing that combines differential privacy with \mpc for query performance optimization. The main technical contributions in this work are:

\squishlist
    \item A query processing engine that offers controlled information leakage with differential privacy guarantees to speed up \federation query processing and provides tunable privacy parameters to control the trade-off between privacy, accuracy, and performance
    \item A computational differential privacy mechanism that securely executes relational operators, while minimizing intermediate result padding of operator outputs 
    \item A novel algorithm that optimally allocates, tracks, and applies differential privacy across query execution  
    \item A protocol-agnostic cost model that approximates the large, and non-linear, computation overhead of \mpc as it cascades up an operator tree  
\squishend

The rest of this paper is organized as follows. In Section~\ref{sec:pdfs} we define {\federation}s, outline our privacy goals and formally define secure computation and differential privacy. Section~\ref{sec:problem} describes the problem addressed by \sysname. Our end-to-end solution, \sysname, and its privacy guarantees, is described in Section~\ref{sec:design}. We show how to optimize the performance of \sysname in Section~\ref{sec:optimization}. Section~\ref{sec:implementation} describes how we implement specific \mpc protocols on top of {\sysname}'s protocol agnostic design. We experimentally evaluate our system implementation over real-world medical data in Section~\ref{sec:results}. We conclude with a survey of related work and future directions.
}

\section{Private Data Federation} \label{sec:pdfs}

\newcommand{\negligible}{\mathrm{negl}}
\newcommand{\view}{\mathrm{VIEW}}
\newcommand{\out}{\mathrm{OUTPUT}}
\newcommand{\dataowner}{\mathrm{DO}}
\newtheorem{assumption}{Assumption}

In this section, we formally define a \federation (PDF), describe privacy goals and assumptions, and define two security primitives -- secure computation and differential privacy.

\begin{figure} [t!]
	\centering
	\includegraphics[width=0.4\textwidth]{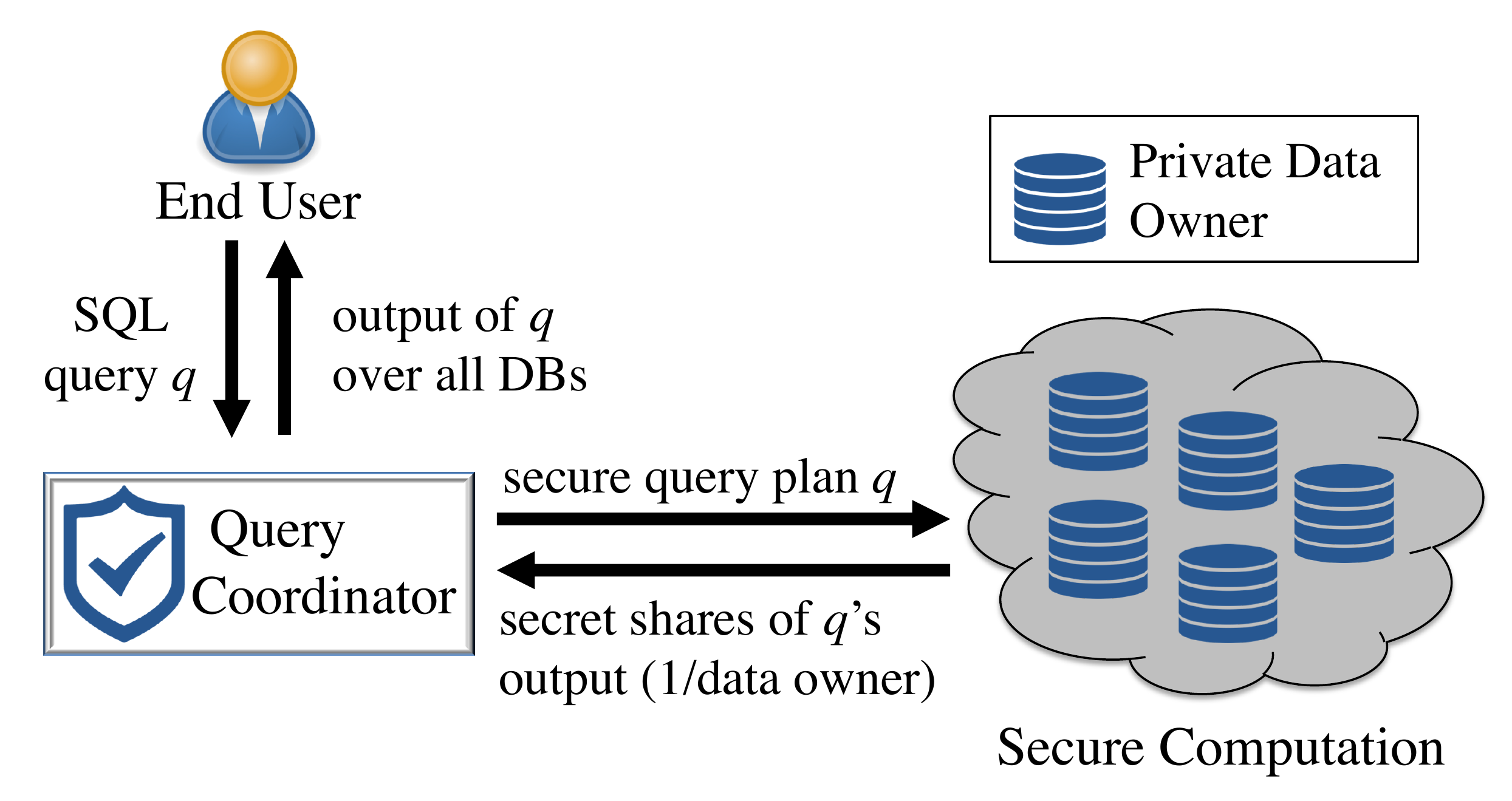}
	\caption{\change{Private data federation architecture}}
	\label{fig:pdf-architecture}
\end{figure}

	A \federation is a collection of autonomous databases that share a unified query interface for offering \textit{in-situ} evaluation of SQL queries over the union of the sensitive data of its members without revealing unauthorized information to any party involved in the query. A \federation has three types of parties: 1) two or more data owners $\dataowner_1,\ldots,\dataowner_m$ that hold private data $D_1,\ldots,D_m$ respectively, and where all $D_i$ share a public schema of $k$ relations ($R^1,\ldots,R^k$); 2) a \hb that plans and orchestrates SQL queries $q$ on the data of the data owners; and 3) a client that writes SQL statements to the \hb. The set of private data $\bar{D}=(D_1,\ldots,D_m)$ owned by the data owners is a horizontal partition of every table in the total data set $D$.
	
	As shown in Figure~\ref{fig:pdf-architecture}, the client first passes an SQL statement $q(\cdot)$ to the \hb, who compiles the query into an optimized secure query plan to be executed by each data owner. Once the data owners finish execution, they each pass their secret share of the output to the \hb, who assembles and returns the result to the  client.

\subsection{Privacy Goals and Assumptions}\label{sec:privacygoals}
Table~\ref{table:trustmodel} summarizes the privacy goals for all parties involved in this process and the necessary assumptions required for achieving these goals.

	\stitle{\bf Privacy Goals:} Data owners require a privacy guarantee that the other parties (data owners, \hb and the client) are not able to reconstruct the private data they hold based on intermediate results of the query execution. In particular, we require data owners to have a differentially private view over the inputs of other data owners (formally defined in Def.~\ref{def:dp}). We aim to support two policies for clients. Clients may be trusted (output policy 1) and allowed to see true answers to queries but must not learn any other information about the private inputs held by data owners. Clients may be untrusted (output policy 2), in which case they only are permitted a differentially private view over the inputs. The \hb is an extension of the client and has the same privacy requirements.

	\begin{table}[t]
		\centering
		\resizebox{\columnwidth}{!}{%
			\begin{tabular}{|c|c|c|}
				\hline
				& \textbf{\begin{tabular}[c]{@{}c@{}}Client Sees \\ True Answers\end{tabular}} & \textbf{\begin{tabular}[c]{@{}c@{}}Client Sees \\ Noisy Answers\end{tabular}}            \\ \hline
				\textbf{\begin{tabular}[c]{@{}c@{}}Privacy Goal \\ for DOs\end{tabular}}    & \multicolumn{2}{c|}{\begin{tabular}[c]{@{}c@{}} Differentially private (DP) view over \\ data held by other DOs\end{tabular}}                                                       \\
				\hline
				\textbf{\begin{tabular}[c]{@{}c@{}}Privacy Goal \\ for Client\end{tabular}} & \begin{tabular}[c]{@{}c@{}}Learns only \\ query answer\end{tabular}   & \begin{tabular}[c]{@{}c@{}}DP view over \\ data held by DOs\end{tabular} \\ \hline
				\textbf{\begin{tabular}[c]{@{}c@{}}Client \& DO\\ Assumption\end{tabular}}        & \multicolumn{2}{c|}{Semi-honest \& Computationally Bounded}                                                                   \\ \hline
				\textbf{\begin{tabular}[c]{@{}c@{}}DOs Colludes \\ with Client\end{tabular}}  & Not allowed                                                                  & Allowed             \\ \hline
				\textbf{\begin{tabular}[c]{@{}c@{}}Client Colludes \\ with some DOs\end{tabular}}  & Not allowed                                                                  & Allowed             \\ \hline
			\end{tabular}%
		}
		\caption{\change{Privacy Goals and Assumptions}} \label{table:trustmodel}
		
	\end{table}
	
	\stitle{Assumptions:} To achieve the privacy goals presented above, a \federation assumes that all parties are computationally bounded and work in the semi-honest setting. Hence, the \hb honestly follows the protocols and creates a secure plan to evaluate the query. All parties faithfully execute the cryptographic protocol created by the \hb, but may attempt to infer properties of private inputs held by other parties by observing the query instruction traces and data access patterns. The \hb is also assumed to be memory-less as it destroys its contents after sending query results back to the client. When the client/\hb sees the true answers (output policy 1), we assume that the client/\hb is trusted to not collude with data owners; otherwise, if the client/\hb shares the true answer with a data owner (or publishes the true answer), the data owner who colludes with the client/\hb can use his own private data and the true query answer to infer the input of other data owners. Similarly, we assume the data owners do not share their true input with the client/\hb; otherwise, the client/\hb can infer the other data owners' input and gain more information than the query answer. When the client/\hb is not trusted, we assume that the client may collude with (as many as all but one) data owners. Our guarantees hold even in that extreme case.

\subsection{Security Primitives}
\stitle{Secure Computation:}
To securely compute functions on distributed data owned by multiple parties, \mpc primitives are required. Let $f:\mathcal{D}^m\rightarrow \mathcal{R}^m$ be a functionality\footnote{Functionality $f$ can be both deterministic or randomized} over $\bar{D}=(D_1,\ldots,D_m)\in \mathcal{D}^m$, where $f_i(\bar{D})$ denotes the $i$-th element of $q(\bar{D})$. Let $\Pi$ be an $m$-party protocol for computing $f$ and $\view_i^{\Pi}(\bar{D})$ be the view of the $i$-th party during an execution of $\Pi$ on $\bar{D}$. For $I=\{i_1,\ldots,i_t\}\subseteq [m]$, we let
$D_I=(D_{i_1},\ldots,D_{i_t})$, $f_I(\bar{D})=(q_{i_1}(\bar{D}),\ldots,q_{i_t}(\bar{D}))$,  and $\view_I^{\Pi}(\bar{D})=(I, \view_{i_1}^{\Pi}(\bar{D}),\ldots,$\\ $\view_{i_1}^{\Pi}(\bar{D}))$.
We define \mpc with respect to semi-honest behavior~\cite{Goldreich:2004:FCV:975541} as follows.
\begin{definition}[\mpc]\label{def:mpc}
We say $m$-party protocol securely computes $f:\mathcal{D}^m\rightarrow \mathcal{R}^m$ if there exists a probabilistic polynomial-time algorithm denoted $S$ and $\negligible(\kappa)$  such that for any non-uniform probabilistic polynomial adversary $A$ and for every $I\subseteq [m]$, every polynomial $p(\cdot)$, every sufficiently large $\kappa \in \mathbb{N}$, every $\hat{D}\in \mathcal{D}^m$ of size at most $p(\kappa)$, it holds that
\begin{eqnarray}
|\Pr[A(S(I,D_I,f_I(\bar{D})),f(\bar{D}))=1]- & \\ \nonumber
\Pr[A(\view_I^{\Pi}(\bar{D}),\out^{\Pi}(\bar{D}))=1]|&\leq \negligible(\kappa),
\end{eqnarray}
where $\negligible(\kappa)$ refers to any function that is $o(\kappa^{-c})$, for all constants $c$ and $\out^{\Pi}(\bar{D})$ denotes the output sequence of all parities during the execution represented in $\view^{\Pi}_I(\bar{D})$.
\end{definition}
Various cryptographic protocols that enable \mpc have been studied~\cite{Beaver1990,chow2009two,Clifton2002,Goldreich:oram,Saia2014,Liu2015,Wang2017,Yao1982} and several are applicable to relational operators~\cite{Bater2017,bogdanov2014privacy,conclave}.

\change{
\stitle{Differential Privacy:}
Differential privacy~\cite{pdtextbook14} is an appealing choice to bound the information leakage on the individual records in the input while allowing multiple releases of statistics about the data. This privacy notion is utilized by numerous organizations, including the US Census Bureau~\cite{onthemap08}, Google~\cite{erlingsson2014rappor}, and Uber~\cite{practicaldp18}. Formally, differential privacy is defined as follows.
}

\begin{definition}[$(\epsilon,\delta)$-Differential Privacy (DP)]\label{def:dp}
A randomized mechanism $f:\mathcal{D}\rightarrow \mathcal{R}$ satisfies $(\epsilon,\delta)$-differential privacy if for any pair of neighboring databases $D,D'\in \mathcal{D}$ such that $D$ and $D'$ differ by adding or removing a row and any set $O\subseteq \mathcal{R}$,
\begin{equation}
\Pr[f(D)\in O] \leq e^{\epsilon} \Pr[f(D')\in O] +\delta.
\end{equation}
\end{definition}

\change{
The differential privacy guarantee degrades gracefully when invoked multiple times. In the simplest form, the overall privacy loss of multiple differentially private mechanisms can be bounded with sequential composition~\cite{Dwork2006a}.
\begin{theorem}[Sequential Composition]\label{theorem:seqcomp}
	If $M_1$ and $M_2$ are $(\epsilon_1,\delta_1)$-DP and $(\epsilon_2,\delta_2)$-DP algorithms that use independent randomness, then releasing the outputs $M_1(D)$ and $M_2(D)$ on  database $D$ satisfies $(\epsilon_1+\epsilon_2,\delta_1+\delta_2)$-DP.
\end{theorem}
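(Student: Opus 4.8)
The plan is to verify the defining inequality of Definition~\ref{def:dp} for the joint mechanism $M(D) := (M_1(D), M_2(D))$ with parameters $(\epsilon_1+\epsilon_2,\, \delta_1+\delta_2)$. Fix arbitrary neighboring databases $D, D'$ and an arbitrary measurable event $O$ in the joint range $\mathcal{R}_1 \times \mathcal{R}_2$, and for each $o_1$ write $O_{o_1} = \{\, o_2 : (o_1, o_2) \in O \,\}$ for its section. Since $M_1$ and $M_2$ use independent randomness, the joint output is distributed as the product of the marginals, so $\Pr[M(D) \in O] = \mathbb{E}_{o_1 \sim M_1(D)}[\, \Pr[M_2(D) \in O_{o_1}] \,]$, and likewise for $D'$; it therefore suffices to show this expectation is at most $e^{\epsilon_1+\epsilon_2}\, \mathbb{E}_{o_1 \sim M_1(D')}[\, \Pr[M_2(D')\in O_{o_1}] \,] + \delta_1 + \delta_2$.

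First I would \emph{peel off} $M_2$: for every fixed $o_1$, $(\epsilon_2,\delta_2)$-DP of $M_2$ gives $\Pr[M_2(D)\in O_{o_1}] \le e^{\epsilon_2}\Pr[M_2(D')\in O_{o_1}] + \delta_2$, and taking the expectation over $o_1 \sim M_1(D)$ absorbs the constant $\delta_2$, leaving $e^{\epsilon_2}\, \mathbb{E}_{o_1 \sim M_1(D)}[\, g(o_1) \,] + \delta_2$ where $g(o_1) := \Pr[M_2(D')\in O_{o_1}] \in [0,1]$. What remains is to peel off $M_1$ from $\mathbb{E}_{o_1\sim M_1(D)}[\,g(o_1)\,]$.

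The main obstacle is that the naive move here --- applying $(\epsilon_1,\delta_1)$-DP of $M_1$ to the threshold events $\{g > t\}$ and integrating over $t \in [0,1]$ --- yields only $\mathbb{E}_{M_1(D)}[g] \le e^{\epsilon_1}\mathbb{E}_{M_1(D')}[g] + \delta_1$, and substituting this back produces the weaker bound $(\epsilon_1+\epsilon_2,\, e^{\epsilon_2}\delta_1 + \delta_2)$-DP, with a spurious factor $e^{\epsilon_2}$ on $\delta_1$. To obtain the claimed $\delta_1 + \delta_2$ I would instead route the argument through a \emph{cleanup lemma}: whenever two distributions satisfy $P(S) \le e^{\epsilon}Q(S) + \delta$ for all $S$, there is a distribution $\widetilde{P}$ with $\|P - \widetilde{P}\|_{\mathrm{TV}} \le \delta$ and $\widetilde{P}(S) \le e^{\epsilon}Q(S)$ for all $S$. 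This follows by capping the density of $P$ (with respect to $P+Q$) at $e^{\epsilon}$ times that of $Q$ on the region where $P$ is too large, and redistributing the at-most-$\delta$ mass so removed into the slack available elsewhere, which is nonnegative because $\epsilon \ge 0$. Applying the lemma to the pairs $(M_1(D), M_1(D'))$ and $(M_2(D), M_2(D'))$ produces distributions $\widetilde{M_1}(D)$ and $\widetilde{M_2}(D)$ that are dominated by $e^{\epsilon_1}M_1(D')$ and $e^{\epsilon_2}M_2(D')$ respectively (with no additive term) and lie within total-variation distance $\delta_1$ and $\delta_2$ of the true marginals. Then $\Pr[M(D)\in O] \le \Pr[(\widetilde{M_1}(D),\widetilde{M_2}(D)) \in O] + \delta_1 + \delta_2$ because total-variation distance is subadditive under products, and the first term on the right is at most $e^{\epsilon_1+\epsilon_2}\Pr[M(D')\in O]$ by chaining the two pointwise dominations (peel $\widetilde{M_2}(D)$, then $\widetilde{M_1}(D)$, using that the relevant section probabilities are nonnegative) exactly as in the peeling steps above but now with no additive slack. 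Combining the two inequalities gives the theorem.

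Two remarks on scope. When $\delta_1 = \delta_2 = 0$ the cleanup lemma is unnecessary and the two peeling steps by themselves establish pure $(\epsilon_1+\epsilon_2)$-DP. And the same argument, iterated, extends to the composition of $m$ mechanisms to give $(\sum_i \epsilon_i,\, \sum_i \delta_i)$-DP; alternatively one could simply appeal to the basic composition theorem in the literature~\cite{pdtextbook14}, but I prefer the self-contained argument sketched above.
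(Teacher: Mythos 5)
The paper offers no proof of this theorem at all --- it is stated as a known result imported from the literature via the citation to \cite{Dwork2006a} --- so there is no in-paper argument to compare yours against; what follows is an assessment of your proof on its own merits. It is correct, and it is in fact the \emph{careful} proof of basic composition rather than the folklore one. You rightly flag that the obvious two-step peeling (condition on $M_1$'s output, apply $M_2$'s guarantee to each section, then apply $M_1$'s guarantee to the resulting $[0,1]$-valued function via its level sets) only delivers $(\epsilon_1+\epsilon_2,\ e^{\epsilon_2}\delta_1+\delta_2)$-DP, and that recovering the advertised additive $\delta_1+\delta_2$ requires the standard characterization of $(\epsilon,\delta)$-indistinguishability as ``within total variation $\delta$ of a distribution that is pointwise $e^{\epsilon}$-dominated'' --- your cleanup lemma, which appears in essentially this form in \cite{pdtextbook14}. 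The lemma itself is sound: the excess mass removed is at most $\delta$, and the slack available for redistribution exceeds that excess by exactly $e^{\epsilon}-1\geq 0$, so the repair always succeeds; and your final assembly uses the independence hypothesis correctly in both places where it is needed, once to factor the joint law as a product and once to invoke subadditivity of total variation under product measures. Two minor caveats are worth recording. First, your argument covers only \emph{non-adaptive} composition with independent coins, which is all the theorem claims and all Algorithm~\ref{algo:sw} requires, since each $Resize()$ call draws fresh truncated-Laplace noise; the adaptive version (where $M_2$ may depend on $M_1$'s output) needs the same ideas but a slightly different bookkeeping. Second, had the theorem been stated with the weaker parameter $e^{\epsilon_2}\delta_1+\delta_2$, the entire cleanup machinery could be dropped; since it asserts the tight constant $\delta_1+\delta_2$, the extra work you do is genuinely necessary to prove it as written.
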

There exist advanced composition theorems that give tighter bounds on privacy loss under certain conditions \cite{pdtextbook14}, but we use sequential composition as defined above as it is general and easy to compute. 

\eat{
 \am{Maybe keep this for now.}
Many techniques that satisfy differential privacy use the following notion of query sensitivity:

\begin{definition}[Query Sensitivity]\label{def:sens}
The sensitivity of a function $f:\mathcal{D}\rightarrow \mathbb{R}^*$ is the maximum difference in the functions output for any pairs of neighboring databases $D$ and $D'$,
\begin{displaymath}
\Delta f = \max_{|D-D'|\cup |D'-D| = 1} \|f(D)-f(D')\|_1
\end{displaymath}
\end{definition}
A classic mechanism that satisfies $(\epsilon,0)$-differential privacy is the Laplace mechanism~\cite{pdtextbook14} defined as follows:
\begin{definition}[Laplace Mechanism]
Given a function function $f:\mathcal{D}\rightarrow \mathbb{R}^d$, the Laplace mechanism adds to true answer $f(D)$ a vector of independent noise variables $\eta\in \mathbb{R}^d$ drawn from the Laplace distribution $Lap(0,\Delta f/\epsilon)^d$, i.e. $\Pr[\eta_i=x]\propto e^{-x\cdot\epsilon/\Delta q}$ for $i=1,\ldots,d$.
\end{definition}
}
}

\eat{
If an $m$-party protocol $\Pi$ securely computes a $(\epsilon,\delta)$-DP mechanism $f$, the information leakage on individual records in the input is bounded by the privacy parameters $(\epsilon,\delta)$ and can be formalized as computational differential privacy~\cite{Mironov:2009:CDP:1615970.1615981} in the distributed setting.
\begin{definition}[$(\epsilon,\delta)$-IND-CDP]\label{def:indcdp}
We say $m$-party protocol $\Pi$ ensures $(\epsilon,\delta)$-IND-CDP for $I_s\subseteq [m]$, if for any non-uniform probabilistic polynomial adversary $A$ and for every $I\subseteq I_s$, for all neighboring databases $\bar{D}, \bar{D}'\in \mathcal{D}^m$ that differ by adding or removing a row, it holds that
\begin{eqnarray}
&\Pr[A(\view_I^{\Pi}(\bar{D}),\out^{\Pi}(\bar{D}))=1]  \\ \nonumber
\leq& e^{\epsilon}\Pr[A(\view_I^{\Pi}(\bar{D}'),\out^{\Pi}(\bar{D}'))=1] +\delta,
\end{eqnarray}
where $\out^{\Pi}(\bar{D})$ denotes the output sequence of all parities during the execution represented in $\view^{\Pi}_I(\bar{D})$.
\end{definition}
When the $m$ parities have a horizontal partition of the union-ed data, then adding or removing a row from $\bar{D}$ will take place in one of the party. Hence, even some of the parties $D_I$ collude, their views over data held by other parties are still differentially private. Similarly to differential privacy, IND-CDP are composable under multiple protocols~\cite{Mironov:2009:CDP:1615970.1615981}.

\begin{definition}[$(\epsilon,\delta)$-IND-CDP]
An ensemble $\{ f_{\kappa}\}_{\kappa\in \mathbb{N}}$ of randomized function
$f_{\kappa}: \mathcal{D} \rightarrow \mathcal{R}_{\kappa}$ provides $(\epsilon,\delta)$-IND-CDP
if there exists a $\delta$ such that for every non-uniform probabilistic polynomial time Turing machine $A$,
every polynomial $p(\cdot)$, every sufficiently large $\kappa \in \mathbb{N}$,
all data sets $D,D'\in \mathcal{D}$ of size at most $p(\kappa)$ such that $|(D-D')\cup(D'-D)|\leq 1$,
and every advice string $z_{\kappa}$ of size at most $p(\kappa)$, it holds that
\begin{displaymath}
\Pr[A_{\kappa}(f_{\kappa}(D))=1]\leq e^{\epsilon} \Pr[A_{\kappa}(f_{\kappa}(D'))=1] + \delta,
\end{displaymath}
where $A_{\kappa}(x)$ for $A(1^{\kappa},z_{\kappa},x)$ and the probability is taken over the randomness of mechanisms $f_{\kappa}$ and adversary $A_{\kappa}$.
\label{def:cdp}
\end{definition}

IND-CDP is a variant of computational differential privacy (CDP) and has several convenient properties, including composition~\cite{Mironov:2009:CDP:1615970.1615981}, which describes the overall privacy loss when executing multiple CDP protocols on the same data.
\begin{theorem}[Sequential Composition]\label{theorem:seqcomp}
If $f_1$ and $f_2$ satisfy $(\epsilon_1,\delta_1)$-CDP and $(\epsilon_2,\delta_2)$-CDP,
then applying these two functions sequentially satisfies $(\epsilon_1+\epsilon_2,\delta_1+\delta_2)$-CDP.
\end{theorem}
\change{Note that for the rest of the paper, any mentions of differential privacy refer to CDP.}
}

\eat{
{\bf Output Policy 1: Client sees true answer.} We make two important assumptions for this policy. The first is that data owners do not collude with the client or the \hb, such that data owners only see (a) the true query answer, (b) communication from data owners to \hb, and (c) communication between data owners. Without this assumption, if a data owner colludes with the client, the client can use the private data of this data owner and the true query answer received to infer other data owners’ private data. The second assumption is that the client and the \hb keep true answers hidden from the public and hence from the data owners, i.e. the client/the \hb do not collude with any data owners. By this assumption, each data owner only sees its own data and private views of others’ data and none of the data owners see the true query answer. Without this assumption, if the client shares the true query answer with one of the data owners, this data owner can use his own private data and the true query answer to infer others’ data. With the aforementioned assumptions, we do not need additional assumptions on collusion between data owners. Even if some of the data owners collude, since they do not see the true query answer, they still cannot infer the data of the remaining data owners and hence have a perfect private view over data held by other data owners.

{\bf Output Policy 2 (Client sees noisy answers): } When the client sees noisy answers from a differentially private mechanism, we make only the first assumption where the data owners do not collude with the client or the \hb, similar to the first case. We do not need to make the second assumption about collusion between the client and the data owners, as the client can publish the noisy answers in this case. All the data owners also have a differentially private view over data held by other data owners. Even some data owners collude, their union-ed view over the data held by other data owners is still differentially private.

In both cases, due to the assumptions made above, no party has a full view of the private data. In addition, we assume the \hb is memory-less as it destroys its contents after sending encrypted results back to the client for decryption. Finally, we also assume the client can send multiple queries. For now, we use sequential composition to combine the privacy usage of each query. For future work, we plan to explore using more advanced composition techniques.
}

\section{Problem Statement} \label{sec:problem}
Our goal is to build a system that permits efficient query answering on {\federation}s while achieving provable guarantees on the information disclosed to the clients as well as the data owners. As discussed in Section~\ref{sec:intro}, prior work can be divided into two classes: (i) systems that use \mpc to answer queries, which give formal privacy guarantees but suffer \textit{4--6 orders of magnitude} slowdowns compared to non-private data federations ~\cite{Bater2017,Zheng2017}, and (ii) systems that use secure hardware \cite{Xu2018,Zheng2017} or property preserving encryption \cite{Popa2011}, which are practical in terms of performance, but have no formal guarantee on the privacy afforded to clients and data owners.


While \mpc based solutions are attractive, and there is much recent work in improving their efficiency~\cite{SP:BHKR13,USS:HEKM11,C:IKNP03}, \mpc solutions fundamentally limit performance due to several factors: 1) a worst-case running time is necessary to avoid leaking information from early termination; 2) computation must be \textit{oblivious}, i.e., the program counters and the memory traces must be independent of the input; 3) cryptographic keys must be sent and computed on by data owners, the cost of which scales with the complexity of the program.

In particular, applying \mpc protocols to execute SQL queries results in extremely slow performance. A SQL query is a directed acyclic graph of database operators. Each operator within the tree takes an input, applies a function, and outputs its result to the next operator. The execution time needed to calculate each intermediate result and the array size required to hold that result leaks information about the underlying data. To address this leakage, {\federation}s must insert dummy tuples to pad intermediate results to their maximum possible size and ensure that execution time is independent of the input data. We demonstrate this property in the following example.

\begin{example}\label{eg:exhaustive-padding}
Figure~\ref{fig:exhaustive-padding} shows a SQL query and its corresponding operator tree. Tuples flow from the source relations through a filter operator, before entering two successive join operators and ending at a distinct operator. In order to hide the selectivity of each operator, the {\federation} pads each intermediate result to its maximum value. At the filter operators, the results are padded as if no tuples were filtered out, remaining at size $n$. This is a significant source of additional computation. If the selectivity of the filter is $10^{-3}$, our padding gives a $1000\times$ performance overhead.  As the tuples flow into the first join, which now receives two inputs of size $n$, exhaustive padding produces an output of $n^2$ tuples. When this result passes through the next join, the result size that the distinct operator must process jumps to $n^3$. If we once again assume a $10^{-3}$ selectivity at the joins, we now see a $10^9\times$ computation overhead. Such a significant slowdown arises from the cumulative effect of protecting cascading operators. By exhaustively padding operator outputs, we see ever-increasing intermediate result cardinalities and, as a function of that growth, ever-decreasing performance.
\end{example}

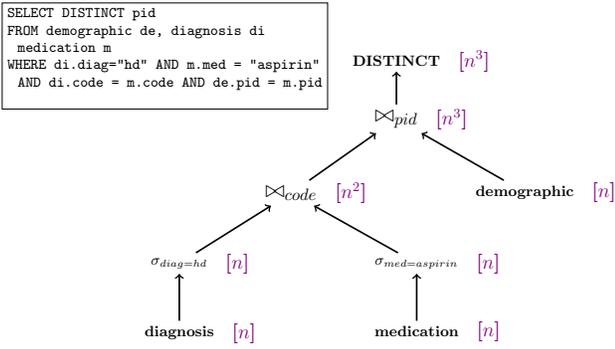
\begin{figure} [!t]
\centering
\resizebox {\columnwidth} {!} {
\begin{tikzpicture}
\node[text centered, font=\bfseries] (distinct) {DISTINCT};
\node[below = 0.7 of distinct, text centered, font=\bfseries] (djoin) {\begin{Large}$\bowtie_{pid}$\end{Large}};

\node[below right = of djoin, text centered, font=\bfseries] (demo) {demographic};

\node[below left = of djoin, text centered,font=\bfseries] (join) {\begin{Large}$\bowtie_{code}$\end{Large}};
\node[below left =  of join, text centered, font=\bfseries] (dfilter) {$\sigma_{diag=hd}$  };
\node[below right =   of join, text centered,  font=\bfseries] (mfilter) {$\sigma_{med=aspirin}$ };

\node[below =  of dfilter, text centered, font=\bfseries] (dscan) {diagnosis };
\node[below= of mfilter, text centered, font=\bfseries] (mscan) {medication};

\node[right= 0.15 of distinct, font=\bfseries] (distinct_card) {\color{violet} \large [$n^3$]};
\node[right= 0.15 of djoin, font=\bfseries] (djoin_card) {\color{violet} \large [$n^3$]};
\node[right= 0.15 of join, font=\bfseries] (join_card) {\color{violet} \large [$n^2$]};

\node[right= 0.15 of demo, font=\bfseries] (demo_card) {\color{violet} \large [$n$]};
\node[right= 0.15 of dscan, font=\bfseries] (dscan_card) {\color{violet} \large [$n$]};
\node[right= 0.15 of mscan, font=\bfseries] (mscan_card) {\color{violet} \large [$n$]};

\node[right= 0.15 of mfilter, font=\bfseries] (mfilter_card) {\color{violet} \large [$n$]};

\node[right= 0.15 of dfilter, font=\bfseries] (dfilter_card) {\color{violet} \large [$n$]};


\draw[->, line width= 1] (demo)  --  (djoin);
\draw[->, line width= 1] (djoin)  --  (distinct);
\draw[->, line width= 1] (join)  --  (djoin);

\draw[->, line width= 1] (dfilter)  --  (join);

\draw[->, line width= 1] (mfilter)  --  (join);
\draw[->, line width= 1] (dscan)  --  (dfilter);

\draw[->, line width= 1] (mscan)  --  (mfilter);

\node[rectangle, draw, align=left, above left = -1.25 and 2.5 of distinct.north east] (query) {
{\normalsize\tt SELECT DISTINCT pid}\\
{\normalsize\tt FROM demographic de, diagnosis di}\\
{\normalsize\tt \,\,\,\,medication m}\\
{\normalsize\tt WHERE di.diag="hd" AND m.med = "aspirin"}\\
{\normalsize\tt \,\,\,\,AND di.code = m.code AND de.pid = m.pid}\\
};
\end{tikzpicture}
}
\caption{Exhaustive padding of intermediate results in an oblivious query evaluation}
\label{fig:exhaustive-padding}
\vspace{-6mm}
\end{figure}


To tackle this fundamental performance challenge, we formalize our research problems as follows.
\squishlist
\item Build a system that uses differentially private leakage of intermediate results to speed up SQL query evaluation while achieving all the privacy goals stated in Section~\ref{sec:pdfs}.
\item Given an SQL operator $o$ and a privacy budget $(\epsilon, \delta)$, design an $(\epsilon, \delta)$-differentially private mechanism that executes the operator with a smaller performance overhead compared to fully oblivious computation.
\item Given a sequence of operators $Q=\{o_1,\ldots,o_l\}$, design an algorithm to optimally split the privacy budget $(\epsilon,\delta)$ among these operators during query execution.
\squishend

In the following sections, we first present our system for more efficient SQL query processing and its building blocks including the differentially private mechanisms for each SQL operator and the overall privacy analysis (Section~\ref{sec:design}), and then present a budget splitting algorithm that optimizes the performance (Section~\ref{sec:optimization}).

\eat{
\todo{Update the problems}
\begin{problem} \todo{(a) Build a system that allows differentially private leakage of intermediate results to speed up query evaluation, (b) given an operator and an epsilon, design an epsilon-DP mechanism that executes the operator (with smaller performance overhead compared to fully oblivious computation) (c) design an algorithm to split the privacy budget optimally across different operators in the execution of a SQL query.}
\end{problem}

\subsection{Problem Statement}
The \hb first receives a SQL statement $q(D)$ from a client and decides the query output policy. This decision can depend on the client or the query and is not the focus of this paper. Based on the output policy and the query, the \hb creates a query plan by producing a query tree consisting of relational operators and then compiling them into blocks of executable \mpc code for the private data owners to run. The data owners then execute these blocks of \mpc sequentially on their private data and obtain the encrypted true answer for $q(D)$. 

To ensure $(\epsilon, \delta)$-CDP, $q(D)$ is jointly perturbed by the data owners subject to a Laplace mechanism, and the noisy answer is then sent back the client. The privacy parameter ($\epsilon$, $\delta$) represents the privacy budget, where $\delta$ is negligibly small. The larger the budget used when noising the results, the weaker the privacy guarantee, but the better the accuracy. When deciding on the size of the privacy budget used for the results, we must consider the privacy versus accuracy trade-off.

A naive application of differential privacy with {\federation}s would be to use the entire privacy budget to perturb the final query results. Though this approach maintains strong privacy protection against semi-honest clients and data owners, the performance cost can be very high. With \sysname, we instead allocate a portion of the privacy budget to the computation. Either the client or the \hb can decide the budget allocation between the output and the computation, based on the desired privacy, performance, and accuracy trade-off. In this work, we focus on optimizing the usage of the privacy budget allocated for computation by minimizing the padding used for intermediate cardinalities and thereby, maximizing performance. Hence our research problem is defined as follows:

\begin{problem}\label{problem:research-problem}
Given a total privacy budget $(\epsilon,\delta)$, if part of the privacy budget $(\epsilon_0, \delta_0)$ is used to perturb the query result, how should the remaining privacy budget $(\epsilon-\epsilon_0,\delta-\delta_0)$ be allocated to minimize the padding needed for the intermediate results? 
\end{problem}

To solve Problem~\ref{problem:research-problem}, we propose \sysname, an approach that optimizes the privacy usage during the query life-cycle to reduce the padding needed to protect intermediate results. We describe the design of \sysname in Section~\ref{sec:design} and detail our implementation in Section~\ref{sec:implementation}.
}

\section{SHRINKWRAP}
\label{sec:design}
\change{
In this section, we present the end-to-end algorithm of \sysname for answering a SQL query in a \federation, the key differentially private mechanism for safely revealing the intermediate cardinality of each operator, and the overall privacy analysis of \sysname.
}

\subsection{End-to-end Algorithm}
\change{
At its core, \sysname is a system that applies differential privacy throughout SQL query execution to reduce intermediate result sizes and improve performance.  The {\federation} ingests a query $q$ as a directed acyclic graph (DAG) of database operators $Q=\{o_1,\ldots,o_l\}$, following the order of a bottom up tree traversal. The \hb decides the output policy based on the client type and assigns the privacy parameters $(\epsilon,\delta)$ for the corresponding privacy goals. To improve performance, a privacy budget of $(\epsilonper,\deltaper)$ will be spent on protecting intermediate cardinalities of the $l$ operators. When client is allowed to learn the true query output (output policy 1), then $\epsilonper=\epsilon$ and $\deltaper=\delta$; when the client is allowed to learn noisy query answers from a differentially private mechanism, then $\epsilonper <\epsilon$ and $\deltaper <\delta$, where the remaining privacy budget $(\epsilon-\epsilonper,\delta-\deltaper)$ will be spent on the query output.

Algorithm~\ref{algo:sw} outlines the end-to-end execution of a SQL query in \sysname. Inputs to the algorithm are the query $Q$, the overall privacy parameters $(\epsilon,\delta)$, the privacy budget for improving the performance $(\epsilonper,\deltaper)$, and the public information $K$, including the relational database schema and the maximum possible input data size at each party.

First, the \hb allocates the privacy budget for improving performance $(\epsilonper,\deltaper)$ among the operators in $Q$ such that 
\vspace{-1em}
\begin{equation}\label{eq:budgetsplit}
\sum_{i=1}^l \epsilon_i = \epsilonper \mbox{ and } \sum_{i=1}^l \delta_i=\deltaper
\end{equation} 
where $(\epsilon_i, \delta_i)$ is the privacy budget allocated to operator $o_i$. If $\epsilon_i = \delta_i = 0$, then the operator is evaluated obliviously (without revealing the intermediate result size). When $\epsilon_i, \delta_i > 0$, \sysname is allowed to reveal an overestimate of the intermediate result size, with larger $(\epsilon_i, \delta_i)$ values giving tighter intermediate result estimates.  The performance improvement of \sysname highly depends on how the budget is split among the operators, but any budget split that satisfies Equation~\ref{eq:budgetsplit} satisfies privacy. We show several allocation strategies in Section~\ref{sec:optimization} and analyze them in the evaluation.


Next, the \hb compiles a secure query plan to be executed by the data owners. The secure query plan traverses the operators in $Q$. For each operator $o_i$, data owners compute secret shares of the true output using \mpc over the federated database and the output of other operators computed from $D$. We denote this secure computation for $o_i$ by $SMC(o_i,D)$. The true outputs (of size $c_i)$ are placed into a secure array $O_i$. The secure array is padded (with encrypted dummy values) so that its size equals the maximum possible output size for operator $o_i$.

 
Then, \sysname calls a $(\epsilon_i,\delta_i)$-differentially private function $Resize()$ to resize this secure array $O_i$ to a smaller array $S_i$ such that a subset of the (encrypted) dummy records are removed. Resizing is described in Section~\ref{sec:resize}. Once all the operators are securely evaluated, if the output policy allows release of true answers to the client (policy 1), the last secure array $S_l$ will be directly returned to the \hb; otherwise,  the data owners jointly and securely compute a differentially private mechanism on $S_l$ with the remaining privacy budget $(\epsilon_0=\epsilon-\epsilonper, \delta_0=\delta-\deltaper)$, denoted by $M^{(\epsilon_0,\delta_0)}(S_l)$, and return the output of $M^{(\epsilon_0,\delta_0)}(S_l)$ to the \hb. For instance, one could use Laplace mechanism to perturb count queries using a multiparty protocol for generating noise as in~\cite{Narayan2012}. Finally, the \hb assembles the final secure array received from the data owners $S_l$, decrypting the final result $R$ and returning it to the client. 
}

\setlength{\textfloatsep}{0.1cm}
\RestyleAlgo{boxruled}
\begin{algorithm}[t]
{\small
\DontPrintSemicolon
\KwIn{PDF query DAG with operators $Q=\{o_1,\ldots,o_l\}$,
the federated database $D$,
public information $K$,
privacy parameters $(\epsilon,\delta)$,
privacy budget for performance $(\epsilonper\leq \epsilon,\deltaper\leq \delta)$
}
\KwOut{Query output $R$}
\BlankLine
$//$\emph{Computed on \hb}\;
$P=\{(\epsilon_1,\delta_1),\ldots, (\epsilon_l,\delta_l) | \sum_{i=1}^l \epsilon_i = \epsilonper,\sum_{i=1}^l \delta_i = \deltaper\}$
$\leftarrow$ AssignBudget$(\epsilonper,\deltaper,Q,K)$ $//$ \emph{(Sec.~\ref{sec:optimization})} \;
\BlankLine
$//$\emph{Computed on private data owners}\;
\For{$i\leftarrow 1$ \KwTo $l$}{
    $(O_i,c_i) \leftarrow$  SMC($o_i,D$) $//$\emph{with exhaustive padding}\;
    $S_i \leftarrow $  Resize($O_i$,$c_i$,$\epsilon_i$,$\delta_i$,$\Delta \carquery_i$)  $//$ \emph{DP resizing (Sec.~\ref{sec:resize})}\;
}
\change{
\uIf{$\epsilon=\epsilonper$ and $\delta=\deltaper$ }{
    Send \hb $S_l$ $//$\emph{output policy 1}\;
}
\Else{
    Query output budget $\epsilon_0 \leftarrow (\epsilon-\epsilonper)$, $\delta_0 \leftarrow (\delta-\deltaper)$  \;
    Send \hb $M^{(\epsilon_0,\delta_0)}(S_l)$ $//$\emph{output policy 2}\;
}}
\BlankLine
$//$\emph{Computed on \hb}\;
$R \leftarrow$ Assemble($S_l$) \;
return $R$ to client\;
\BlankLine
\BlankLine
\emph{function} Resize($O$, $c$, $\epsilon$, $\delta$, $\Delta \carquery$)\;
\Indp
$\tilde{c} \leftarrow c + $TLap($\epsilon$, $\delta$, $\Delta \carquery$) $//$ \emph{perturb cardinality} \;
$O \leftarrow$ Sort($O$) $//$\emph{obliviously sort dummy tuples to the end}\;
$S \leftarrow$ new SecureArray($O[1,\ldots,\tilde{c}]$) \;
return $S$\;
\BlankLine
\caption{End-to-end \sysname algorithm}
\label{algo:sw}
}
\end{algorithm}
\setlength{\floatsep}{0.1cm}


\change{
\subsection{DP Resizing Mechanism}\label{sec:resize}
For each operator $o_i$, \sysname first computes the true result using \mpc and places it into a exhaustively padded secure array $O_i$. For example, as shown in Figure~\ref{fig:padding}, a join operator with two inputs, each of size $n$, will have a secure array $O_i$ of size $n^2$. This is the exhaustive padding case shown on the left side of the figure, with the {\federation} inserting the materialized intermediate results into the secure array and padding those results with dummy tuples.  The Resize() function takes in the exhaustively padded secure array $O_i$, the true cardinality of the output $c_i$, the privacy parameters $(\epsilon_i,\delta_i)$, and the sensitivity of the cardinality query at operator $o_i$, denoted by $\Delta \carquery_i$. The cardinality query $\carquery_i(D)$ returns the output of the operator $o_i$ in the PDF query $q$. The sensitivity of a query is defined as follows.
\begin{definition}[Query Sensitivity]\label{def:sens}
The sensitivity of a function $f:\mathcal{D}\rightarrow \mathbb{R}^*$ is the maximum difference in its output for any pairs of neighboring databases $D$ and $D'$,
\begin{displaymath}
\Delta f = \max_{\{D, D' \mbox{ s.t. } |D-D'|\cup |D'-D| = 1\}} \|f(D)-f(D')\|_1
\end{displaymath}
\end{definition}

Based on these inputs, the Resize() function runs a $(\epsilon_i,\delta_i)$-differentially private truncated Laplace mechanism (Def.~\ref{def:truncatedlap}) to obtain a new, differentially private cardinality $\tilde{\carquery}_i > \carquery_i$ for the secure array. Next, this function sorts the exhaustively padded secure array $O_i$ obliviously such that all dummy tuples are at the end of the array and then copies the  first $\tilde{\carquery}_i$ tuples in the sorted array into a new secure array $S_i$ of size $\tilde{\carquery}_i$. This new secure array $S_i$ of a smaller size than $O_i$ is used for the following computations.


\begin{figure} [b]
\centering
\includegraphics[width=0.4\textwidth]{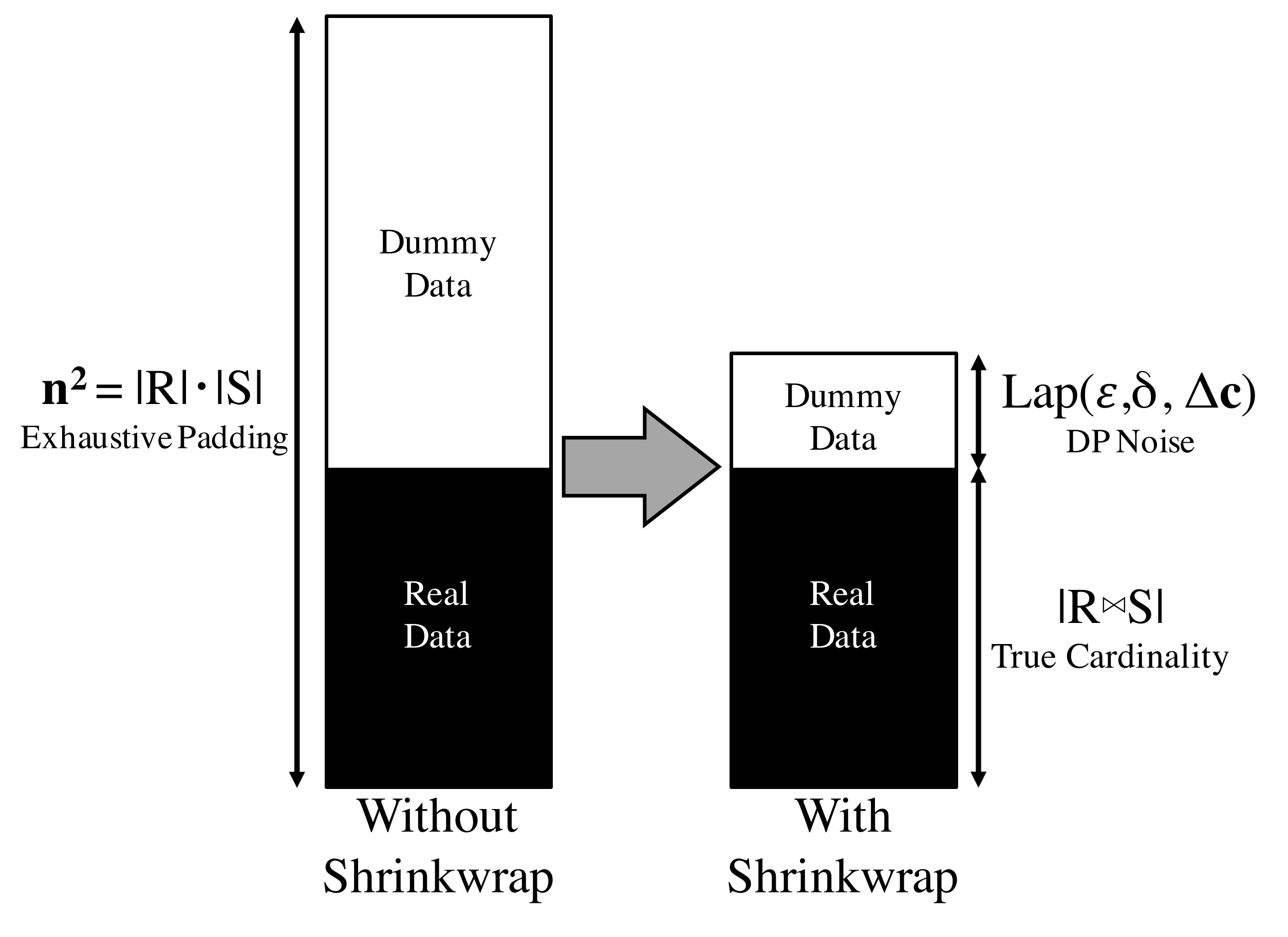} 
\caption{Effect of Shrinkwrap on Intermediate Result Sizes When Joining Tables R and S} \label{fig:padding}
\end{figure}

\stitle{Noise Generation:}
\sysname generates differentially private cardinalities by using a truncated Laplace mechanism. Given an operator $o$, with a privacy budget $(\epsilon, \delta)$ and the sensitivity parameter for the cardinality query $\Delta \carquery$, this mechanism computes a noisy cardinality for use in Algorithm~\ref{algo:sw}.

\begin{definition}[Truncated Laplace Mechanism]\label{def:truncatedlap}
\ \\
Given a query $\carquery:\mathcal{D} \rightarrow \mathbb{N}$, the truncated Laplace mechanism $\tlap(\epsilon,\delta,\Delta \carquery)$ adds a non-negative integer noise $\max(\eta,0)$ to the true query answer $\carquery(D)$, where $\eta \in \mathbb{Z}$ follows a distribution, denoted by $L(\epsilon,\delta,\Delta \carquery)$
that has a probability density function $\Pr[\eta=x] = p \cdot e^{- (\epsilon/\Delta \carquery)|x-\eta^0|}$,
where $p = \frac{e^{\epsilon/\Delta \carquery} - 1}{e^{\epsilon/\Delta \carquery}+ 1}$,
$\eta_0 = -\frac{\Delta \carquery \cdot \ln((e^{\epsilon/\Delta \carquery} + 1)\delta)}{\epsilon} + \Delta \carquery$.
\end{definition}
This mechanism allows us to add non-negative integer noise to the intermediate cardinality query which corresponding to the number of the dummy records. The noise $\eta$ drawn from $L(\epsilon,\delta,\Delta \carquery)$ satisfies that  $\Pr[\eta<\Delta \carquery] \leq \delta$, which enables us to show the privacy guarantee of this mechanism.

\begin{theorem} The truncated Laplace mechanism satisfies $(\epsilon,\delta)$-differential privacy. \end{theorem}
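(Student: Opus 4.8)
The plan is the standard ``good event / bad event'' decomposition for $(\epsilon,\delta)$-differential privacy, with extra care taken for the clamp $\max(\eta,0)$. Fix an arbitrary ordered pair of neighboring databases $D,D'$ and any $O\subseteq\mathbb{N}$, and abbreviate $a=\carquery(D)$, $a'=\carquery(D')$; since $\carquery$ is a cardinality query, $|a-a'|\le\Delta\carquery$. The mechanism outputs the integer random variable $Y=a+\max(\eta,0)$ (resp.\ $Y'=a'+\max(\eta,0)$ on $D'$), where $\eta\sim L(\epsilon,\delta,\Delta\carquery)$; note $Y\ge a$ always. It suffices to exhibit a set $B\subseteq\mathbb{Z}$ with (i) $\Pr[Y=y]\le e^{\epsilon}\Pr[Y'=y]$ for every $y\notin B$ and (ii) $\Pr[Y\in B]\le\delta$, since then $\Pr[Y\in O]\le\Pr[Y\in B]+e^{\epsilon}\Pr[Y'\in O\setminus B]\le\delta+e^{\epsilon}\Pr[Y'\in O]$, which is exactly the definition once established for every ordered neighboring pair.

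First I would verify (i). Away from the clamp, i.e.\ for $y>\max(a,a')$, we have $y-a\ge1$ and $y-a'\ge1$, so $\Pr[Y=y]=\Pr[\eta=y-a]$ and $\Pr[Y'=y]=\Pr[\eta=y-a']$; substituting the density of $\eta$ and applying the triangle inequality, their ratio is $\exp\left(\frac{\epsilon}{\Delta\carquery}\bigl(|y-a'-\eta_0|-|y-a-\eta_0|\bigr)\right)\le\exp\left(\frac{\epsilon}{\Delta\carquery}|a-a'|\right)\le e^{\epsilon}$ --- this is just pure $\epsilon$-DP of the shifted discrete Laplace. The one subtlety is the single output value $y=a'$ in the case $a<a'$, where the clamp makes $\Pr[Y'=a']=\Pr[\eta\le0]$ while $\Pr[Y=a']=\Pr[\eta=a'-a]$; since $0<a'-a\le\Delta\carquery<\eta_0$ (the last inequality holding for the negligibly small $\delta$ we use), $\Pr[\eta=a'-a]=p\,e^{-(\epsilon/\Delta\carquery)(\eta_0-(a'-a))}\le e^{\epsilon}\,p\,e^{-(\epsilon/\Delta\carquery)\eta_0}=e^{\epsilon}\Pr[\eta=0]\le e^{\epsilon}\Pr[\eta\le0]$, so the ratio there is again $\le e^{\epsilon}$. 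Hence it suffices to take $B=\{y<a'\}$ when $a\le a'$ and $B=\{a\}$ when $a>a'$: for every $y\notin B$ we are in one of the situations just handled, or else $\Pr[Y=y]=0$.

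Next I would verify (ii), which is where the offset $\eta_0$ does its work. If $a\le a'$ then $\Pr[Y\in B]=\Pr[\max(\eta,0)\le a'-a-1]\le\Pr[\eta<\Delta\carquery]\le\delta$, using $a'-a\le\Delta\carquery$ together with the tail bound on $L(\epsilon,\delta,\Delta\carquery)$ noted just before the theorem; that tail bound is itself the only genuine computation, obtained by summing the geometric series $\sum_{x\le\Delta\carquery-1}p\,e^{-(\epsilon/\Delta\carquery)(\eta_0-x)}$ and substituting $\eta_0=\Delta\carquery-\frac{\Delta\carquery}{\epsilon}\ln\bigl((e^{\epsilon/\Delta\carquery}+1)\delta\bigr)$, which makes the sum collapse to exactly $\delta$. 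If instead $a>a'$ then $\Pr[Y\in B]=\Pr[Y=a]=\Pr[\max(\eta,0)=0]=\Pr[\eta\le0]\le\Pr[\eta<\Delta\carquery]\le\delta$ since $\Delta\carquery\ge1$. This establishes (ii), and with (i) the theorem follows.

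The main obstacle is precisely the clamp. Because $Y=\max(a+\eta,\,a)$ and the clamp point $a=\carquery(D)$ depends on the data, $Y$ is \emph{not} a data-independent post-processing of the pure-$\epsilon$-DP variable $a+\eta$, so one cannot simply invoke the post-processing property and the $\delta$ slack is unavoidable. All the delicacy lives in the window $[\min(a,a'),\max(a,a')]$: outputs strictly between the two cardinalities are unreachable under the distribution with the larger base value, and the extreme output in that window is reached by one distribution only through its collapsed left tail, so in each case one must show that the mass $Y$ places there --- always a left-tail event $\{\eta\le k\}$ with $k<\Delta\carquery$ --- is at most $\delta$. That is exactly what the calibration of $\eta_0$ buys, via the identity $\Pr[\eta<\Delta\carquery]=\delta$; verifying this identity is the one real computation, and everything else is the routine good/bad decomposition above.
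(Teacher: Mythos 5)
Your proof is correct and follows essentially the same route as the paper's: a good/bad-event decomposition in which the outputs lying between the two cardinalities form the bad set, whose mass is bounded by $\delta$ via the calibration $\Pr[\eta<\Delta\carquery]=\delta$, with a pointwise $e^{\epsilon}$ ratio everywhere else. You are in fact somewhat more thorough than the paper, which argues only the direction from the smaller cardinality to the larger one ``W.L.O.G.''\ and does not separately verify the clamp point $y=a'$ or the reverse ordered direction (where the bad set is the singleton $\{a\}$), both of which you handle explicitly.
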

\begin{proof} 
For any neighboring databases $D_1,D_2$, let $c_1=\carquery(D_1)\geq 0$ and $c_2=\carquery(D_2)\geq 0$. W.L.O.G. we consider $c_2\geq c_1$. It is easy to see that $\Pr[\tlap(D_1) \in (-\infty,c_1)]=0$ and $\Pr[\tlap(D_2) \in (-\infty,c_2)]=0$.
By the noise property, for any $o\geq c_2\geq c_1$, it is true that $|\ln\frac{\Pr[\tlap(D_2)=o]}{\Pr[\tlap(D_1)=o]}|\leq\epsilon$. However, when the output $o \in [c_1, c_2)$, $\Pr[\tlap(D_2) = o] = 0$, but the $\Pr[\tlap(D_2) = o] > 0$, making the ratio of probabilities unbounded. Nevertheless, we can bound $\Pr[\tlap(D_2) \in [c_1, c_2)]$ by $\delta$ as shown below. 

Let $O^* = (-\infty, c_2)$. We can show that for any output set $O$ of query $\carquery()$, we have
\begin{eqnarray}
&& \Pr[\tlap(D_1)\in O] \nonumber\\
&=& \Pr[\tlap(D_1)\in (O\cap O^*)] + \Pr[\tlap(D_1)\in (O-O^*)]  \nonumber \\
&\leq& \Pr[\tlap(D_1) \in [c_1,c_2)] + e^{\epsilon}\Pr[\tlap(D_2)\in (O-O^*)] \nonumber \\
&=& \Pr[\eta_1 < \Delta \carquery] + e^{\epsilon}\Pr[\tlap(D_2)\in O] \nonumber\\
&=& \delta + e^{\epsilon}\Pr[\tlap(D_2)\in O]
\end{eqnarray}
Hence, this mechanism satisfies $(\epsilon,\delta)$-DP.
\end{proof}
}

\eat{
\begin{definition}[Lap($\epsilon,\delta,\Delta q$))]
For a given query $q:\mathcal{D}\rightarrow \mathcal{R}$,
a random variable follows the Lap($\epsilon,\delta,\Delta q$) distribution if it has a probability density function
\begin{displaymath}
\Pr[\eta=x] = p * e^{- (\epsilon/\Delta q)|x-\eta^0|}, \forall x \in \mathbb{Z},
\end{displaymath}
where
\change{$p = \frac{e^{\epsilon/\Delta q} - 1}{e^{\epsilon/\Delta q}+ 1}$},
\change{$\eta_0 = -\frac{\Delta q* (ln((e^{\epsilon/\Delta q} + 1)\delta) -\epsilon)}{\epsilon}$},
and where $\Delta q$ is known as the sensitivity of the query, formally, defined as,
\begin{displaymath}
\Delta q = \max_{|D-D'|\cup |D'-D| = 1} \|q(D)-q(D')\|_1
\end{displaymath}
Intuitively, the sensitivity of a query is the maximum difference in the query output for any pairs of neighboring databases $D$ and $D'$.
\label{def:noise}
\end{definition}

\begin{theorem}
Given an intermediate cardinality query $q_i$ for operator $o_i$ in the query tree, adding $\max(\eta_i,0)$ number of dummy records,
where $\eta_i\sim$Lap($\epsilon_i,\delta_i, \Delta q_i$) satisfies $(\epsilon_i,\delta_i)$-DP.
\label{theorem:cdp}
\end{theorem}
\begin{proof}(sketch) The probability to draw a negative value from Lap($\epsilon_i,\delta_i, \Delta q_i$) is $\frac{e^{-\eta_0\epsilon_i/\Delta q_i}}{e^{\epsilon_i}/\Delta q_i +1}<\delta_i$. With $1-\delta_i$, the probability to have the same noisy intermediate cardinality from any neighboring databases is bounded by $e^{\epsilon_i}$.
\end{proof}
}

\stitle{Sensitivity Computation:}
In order to create our noisy cardinalities using the truncated Laplace mechanism in Definition~\ref{def:truncatedlap}, we must determine the cardinality query sensitivity at the output of each operator in the query tree. An intermediate result cardinality, $\carquery_i$, is the output size of the sub-query tree rooted at operator $o_i$. The sensitivity of $\carquery_i$ depends on the operators in its sub-query tree, and we compute this as a function of the stability of operators in the sub-query tree. The stability of an operator bounds the change in the output with respect to the change in the input of the operator.
\begin{definition}[Stability~\cite{mcsherry2009privacy}]
A transformation operator is $s$-stable if for any two input data sets A and B,
\begin{displaymath}
|T(A) \ominus T(B)| \leq s \times |A \ominus B|
\end{displaymath}
\label{def:stability}
\vspace{-6mm}
\end{definition}
For unary operators, such as SELECT, PROJECT, the stability usually equals one. For JOIN operators, a binary operators, the stability is equal to the maximum multiplicity of the input data. The sensitivity of the intermediate cardinality query $\carquery_i$ can be recursively computed bottom up, starting from a single change in the input of the leaf nodes.
\begin{example} Given a query tree shown in Figure~\ref{fig:sensitivity-tree}, the bottom two filter operators each have a stability of one, giving a sensitivity ($\Delta \carquery$) of 1 at that point. Next, the first join has a stability of $m$ for the maximum input multiplicity, which increases the overall sensitivity to $m$. The following join operator also has a stability of $m$, which changes the sensitivity to $m^2$. Finally, the DISTINCT operator has a stability of 1, so the overall sensitivity remains $m^2$.
\begin{figure} [tph]
\centering
\resizebox {0.9\columnwidth} {!} {
\begin{tikzpicture}
\node[text centered, text width=3cm, font=\bfseries] (distinct) {DISTINCT \color{violet} \\ \large [$\Delta \carquery_5 = m^2$]};
\node[below = 0.7 of distinct, text centered, text width=3cm, font=\bfseries] (djoin) {\begin{Large}$\bowtie_{pid}$\end{Large} \\ \color{violet} \large [$\Delta \carquery_4 = m^2$]};
\node[below right = 0.5 of djoin, text centered, font=\bfseries] (demo) {demographics};
\node[below left =  0.5 of djoin, text centered, text width=3cm, font=\bfseries] (join) {\begin{Large}$\bowtie_{pid}$\end{Large} \\ \color{violet} \large [$\Delta \carquery_3 = m$]};
\node[below left =  0.5 of join, text centered, text width=3cm, font=\bfseries] (dfilter) {$\sigma_{diag=hd}$ \\ \color{violet} \large [$\Delta \carquery_1 = 1$]};
\node[below right =  0.5 of join, text centered, text width=3cm, font=\bfseries] (mfilter) {$\sigma_{med=aspirin}$ \\ \color{violet} \large [$\Delta \carquery_2 = 1$]};

\node[below = 0.9 of dfilter, text centered, font=\bfseries] (dscan) {diagnosis};
\node[below=0.7 of mfilter, text centered, font=\bfseries] (mscan) {medication};

\draw[->, line width= 1] (demo)  --  (djoin);
\draw[->, line width= 1] (djoin)  --  (distinct);
\draw[->, line width= 1] (join)  --  (djoin);

\draw[->, line width= 1] (dfilter)  --  (join);

\draw[->, line width= 1] (mfilter)  --  (join);
\draw[->, line width= 1] (dscan)  --  (dfilter);

\draw[->, line width= 1] (mscan)  --  (mfilter);

\end{tikzpicture}
}
\caption{Sensitivity analysis for running example} 
\label{fig:sensitivity-tree}
\vspace{-6mm}
\end{figure}
\end{example}
Using these principles, \sysname calculates the sensitivity for each operator in a query tree for use in Laplace noise generation during query execution. More details can be seen in PINQ \cite{mcsherry2009privacy} for computing the sensitivity of SQL queries.

\stitle{Secure Array Operations:}
Note that each time \sysname adds or removes tuples, it normally needs to pay an I$/$O cost for accessing the secure array. We avoid paying the full cost by carrying out a bulk unload and load. To start the $Resize()$ function, \sysname takes an intermediate result secure array and sorts it so that all dummy tuples are at the end of the secure array. Next, \sysname cuts off the end of the original secure array and copies the rest into the new, differentially-private sized secure array. Since the new secure array has a size guaranteed to be larger than the true cardinality, we know that cutting off the end will only remove dummy tuples. Thus, we avoid paying the full I$/$O cost for each tuple. However, we do pay an $O(n\log(n))$ cost for the initial sorting, as well as an $O(n)$ cost for bulk copying the tuples.  This algorithm is an extension of the one in~\cite{goodrich2011privacy} and variants of it are in~\cite{Nayak, nikolaenko2013privacy, Zheng2017}.

\change{
\subsection{Overall Privacy Analysis}
Given a PDF query DAG $q$ with operators $\{o_1,\ldots, o_l\}$ and a privacy parameter $(\epsilon,\delta)$, Algorithm~\ref{algo:sw} achieves the privacy goals stated in Section~\ref{sec:pdfs}.
\begin{theorem}\label{theorem:sw-algo}
Under the assumptions described in Section~\ref{sec:pdfs}, Algorithm~\ref{algo:sw} achieves the privacy goals: data owners have a $(\epsilon,\delta)$-computational differentially private view over the input data of other data owners; when true answers are returned to the client, the client learn nothing else; when noisy answers are returned the client, the client has a $(\epsilon,\delta)$-computational differentially private view over the input data of all the data owners.
\end{theorem}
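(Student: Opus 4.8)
The plan is to reason, for each of the three claims, about the \emph{view} of the relevant coalition during one execution of Algorithm~\ref{algo:sw}. I would decompose such a view into three parts: the \mpc transcripts of the per-operator computations $SMC(o_i,D)$ (and, under policy~2, of the final mechanism $M^{(\epsilon_0,\delta_0)}(S_l)$); the resized array sizes $\tilde{\carquery}_1,\dots,\tilde{\carquery}_l$ that $Resize()$ reveals publicly; and the final output handed to the \hb. The first part contributes nothing beyond the others, by security of \mpc (Def.~\ref{def:mpc}); the rest is the output of a composed differentially private computation on $D$.

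First I would show that $(\tilde{\carquery}_1,\dots,\tilde{\carquery}_l)$, as a function of the federated database $D$, is $(\epsilonper,\deltaper)$-differentially private. The crucial observation is that $Resize()$ only ever deletes \emph{dummy} tuples, because $\tilde{\carquery}_i \geq \carquery_i$ always; hence the real tuples entering every downstream operator are unaffected, and each true intermediate cardinality $\carquery_i(D)$ is a deterministic function of $D$ alone, independent of the noise injected at earlier operators. Therefore $\tilde{\carquery}_i$ is exactly $\tlap(\epsilon_i,\delta_i,\Delta \carquery_i)$ applied to the deterministic query $\carquery_i(\cdot)$; since the stability-based computation of Section~\ref{sec:resize} makes $\Delta \carquery_i$ an upper bound on the true sensitivity of $\carquery_i$, this release is $(\epsilon_i,\delta_i)$-DP by the Truncated Laplace theorem. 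Sequential composition (Theorem~\ref{theorem:seqcomp}) over $i=1,\dots,l$, together with the budget constraint $\sum_i\epsilon_i=\epsilonper$ and $\sum_i\delta_i=\deltaper$, gives the claim; and under policy~2, composing further with the $(\epsilon_0,\delta_0)$-DP release $M^{(\epsilon_0,\delta_0)}(S_l)$ yields an $(\epsilon,\delta)$-DP release of $D$.

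Next I would lift these information-theoretic DP statements to \emph{computational} DP for the coalitions allowed by the trust model of Table~\ref{table:trustmodel}, via the standard fact that a secure computation of a differentially private functionality is computationally differentially private (cf.\ Mironov et al.~\cite{Mironov:2009:CDP:1615970.1615981}). For a coalition $I$ of data owners, security of \mpc provides a p.p.t.\ simulator $\mathcal S$ whose output on $(D_I, \text{the messages revealed to }I)$ is computationally indistinguishable from $I$'s real view; under policy~1 those messages are, up to computationally indistinguishable secret shares, just $(\tilde{\carquery}_1,\dots,\tilde{\carquery}_l)$ (non-collusion with the client keeps the true answer $S_l$ out of $I$'s view), while under policy~2 they also include $M^{(\epsilon_0,\delta_0)}(S_l)$. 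For neighboring federated databases $\bar D,\bar D'$ that differ in a single row held by a party \emph{outside} $I$, we have $D_I$ identical on both, so $\mathcal S$ is a p.p.t.\ post-processing of the DP quantity of the previous paragraph and hence itself $(\epsilon,\delta)$-DP (closure of differential privacy under post-processing). Chaining ``real view $\approx_c$ simulator'' on $\bar D$, the DP inequality between the simulator's distributions on $\bar D$ and $\bar D'$, and ``simulator $\approx_c$ real view'' on $\bar D'$ --- a short hybrid absorbing the $\negligible(\kappa)$ terms --- gives that $I$'s view is $(\epsilon,\delta)$-computationally DP with respect to the data of the other data owners. The client/\hb case under policy~2 is the same, with the coalition taken to be the client together with up to $m-1$ colluding data owners. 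Under policy~1 the client/\hb receives only the secret-shared final result, so \mpc security gives a simulator fed nothing but $q(\bar D)$, establishing that the client learns nothing about the data owners beyond the true answer.

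The main obstacle I expect is the last chaining step: interleaving the computational indistinguishability inherited from \mpc with the statistical $(\epsilon,\delta)$-DP inherited from the truncated Laplace mechanism and composition, so that the multiplicative $e^\epsilon$ slack, the additive $\delta$, and the cryptographic $\negligible(\kappa)$ error all collapse into one clean computational-DP bound --- and being careful that the neighboring-database quantifier is restricted to rows held outside the coalition. A secondary point to nail down is the ``deterministic cardinality'' observation, namely that $Resize()$ truly never drops a real tuple for any supported operator, and that the recursively computed $\Delta \carquery_i$ is a sound upper bound on the sensitivity of $\carquery_i$, since both are what license the appeal to the Truncated Laplace theorem and to sequential composition.
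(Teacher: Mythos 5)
Your proposal follows essentially the same route as the paper's (sketch) proof: construct an ideal simulation whose only data-dependent release is the sequence of truncated-Laplace-perturbed cardinalities, show that this release is $(\epsilonper,\deltaper)$-DP by per-operator application of the truncated Laplace mechanism plus sequential composition over the budget split (composing further with the $(\epsilon_0,\delta_0)$-DP output mechanism under policy~2), and then invoke \mpc security to argue the real view is computationally indistinguishable from the simulator's, yielding computational DP. Your write-up is in fact more explicit than the paper's sketch about the points it leaves implicit --- that $Resize()$ never drops real tuples so each $\carquery_i(D)$ is a deterministic function of $D$, the restriction of the neighboring-database quantifier to rows outside the coalition, and the hybrid chaining that absorbs the $\mathrm{negl}(\kappa)$ terms --- but the underlying argument is the same.
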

\begin{proof}(sketch)
We prove that the view of data owners satisfies computational differential privacy \cite{Mironov:2009:CDP:1615970.1615981,Vadhan2017} by showing that the view of any data owner is computationally indistinguishable from its view in an ideal simulation that outputs \emph{only} the differentially private cardinalities of each operator output.

Let $Sim$ be a simulator that takes as input the horizontal partitions held by the data owners $(D_1, ..., D_m)$, and outputs a vector $S^\bot_i$ for operator $o_i$ such that: (a) $|S^\bot_i|$ is drawn from $\carquery_i + \tlap()$, and (b) $S^\bot_i$ contains encryptions of 0 (of the appropriate arity). It is easy to see that function computed by $Sim$ satisfies $(\epsilonper, \deltaper)$-DP: (a) at each operator, the release of $|S^\bot_i|$ satisfies $(\epsilon_i, \delta_i)$-DP, and (b) By sequential composition (Theorem~\ref{theorem:seqcomp}) and Equation~\ref{eq:budgetsplit}, releasing all the cardinalities satisfies $(\epsilonper, \deltaper)$-DP. When the output itself is differentially private, there is an additional privacy loss to data owners (since the client could release the output to the data owners), but the overall algorithm still satisfied $(\epsilon, \delta)$-DP by sequential composition.

The real execution outputs secure arrays $S_i$ at each operator $o_i$ using \mpc. This ensures that the view of a data owner in the real execution is computationally indistinguishable from the execution of $Sim$, which satisfies $(\epsilon,\delta)$-DP. Thus, we can show that the view of any data owner satisfies $(\epsilon, \delta)$-computational DP. The privacy guaranteed to the client and \hb can be analogously argued.
\end{proof}

\subsection{Multiple SQL Queries}\label{sec:multiplequeries}
The privacy loss of \sysname due to answering multiple queries is analyzed using sequential composition (Theorem~\ref{theorem:seqcomp}). There are two approaches to minimize privacy loss across multiple queries: (i) applying advanced composition theorems that give tighter bounds on privacy loss under certain conditions~\cite{pdtextbook14} or (ii) optimizing privacy budget allocation across the operators of a workload of predefined SQL queries (e.g. using \cite{McKenna:2018:OEH:3231751.3242939}). These two approaches are orthogonal to our work and adapting these ideas to the context of secure computation is an interesting avenue for future work.

}


\section{Performance Optimization}
\label{sec:optimization}
A key technical challenge is to divide the privacy budget for performance $(\epsilonper, \deltaper)$  across the different operators so the overall runtime of the query execution is minimized. 
\change{\begin{problem}\label{problem:research-problem}
  Given a privacy budget $(\epsilonper,\deltaper)$, a PDF query DAG $q$ with operators $Q=\{o_1,\ldots,o_l\}$, and public information $K$ which includes the sensitivities of the intermediate cardinality queries $\{\Delta \carquery_1,\ldots, \Delta \carquery_l\}$, the schema information, compute the share $(\epsilon_i, \delta_i)$ assigned to $o_i$ such that: (a) the assignment ensures privacy, i.e., $\sum_{i=1}^l \epsilon_i = \epsilonper$,  and (b) the overall run time of the query is minimized.
\end{problem}}


\subsection{Baseline Privacy Budget Allocation}
We first consider two baseline strategies for allocating the privacy budget to individual operators.

\stitle{Eager:} This approach allocates the entire budget $(\epsilonper,\deltaper)$ to the first operator in the query tree. In Figure~\ref{fig:exhaustive-padding}, this equates to minimizing the output cardinality of the filter on diagnoses and executing the rest of the tree obliviously. A benefit of this approach is that since the intermediate results of the first operator flow through the rest of the tree, every operator will see the benefit of minimizing the output cardinality. On the other hand, in some trees, certain operators, like joins, have an output-sized effect on the total query performance. Not allocating any privacy budget to those operators results in a missed optimization opportunity.

\stitle{Uniform:} The second approach splits the privacy budget $(\epsilonper,\deltaper)$ uniformly across the query tree, resulting in equal privacy parameters for each operator. The benefit here is that every single operator receives part of the budget, so we do not lose out on performance gains from any operator. The drawback of the uniform approach, like the one-time approach, is that some operators produce larger effects, so by not allocating more of the budget to them, we lose out on potential performance gains.

\subsection{Optimal Privacy Budget Allocation}
We design a third approach which uses an execution cost model as an objective function and applies convex optimization to determine the optimal budget splitting strategy.

\stitle{Cost Model:} The execution cost of a query in Algorithm~\ref{algo:sw} mainly consists of (a) the cost for securely computing each operator and (b) the I$/$O cost of handling each intermediate result by \sysname. We model these cost as functions of the corresponding input data sizes and output data size.  The cost of secure computation at operator $o_i$ is represented by a function $cost_{o_i}({\bf N}_i)$, where ${\bf N}_i$ be the set of the cardinalities of the input tables.  The I$/$O cost of handling each intermediate result of operator $o_i$ by \sysname is mainly the cost to sort the exhaustively padded array of size $n'$ and the cost of copying the array of size $n$ to the new array of size $n'$, denote by $cost_{sort}(n)$ and $cost_{copy}(n,n')$ respectively. In particular, the size of the exhaustively padded array also depends on the sizes of the input tables, represented by a function $o_i({\bf N}_i)$. Most of the time, the size of the exhaustively padded array is the product of the input table sizes. For example, if operator $o_i$ takes one input table of size $n_1$, then $o_i({\bf N}=\{n_1\}) =n_1$; if operator $o_i$ takes two input tables of size $n_1$ and $n_2$, then $o_i({\bf N}=\{n_1,n_2\}) =n_1\cdot n_2$.  The number of tuples $n'$ copied to the new array depends on the noisy output size $n_i$.

The input tables to an operator can be a mix of base tables and the intermediate output tables. Given a database of $k$ relations $(R_1,\ldots, R_k)$, there will be an additional $l$ number of intermediate output tables generated in a query tree of $l$ operators. The public knowledge $K$ contains the table sizes for the original tables, but the output cardinalities of the operators are unknown. In the private setting, we cannot use the true cardinality for the estimation. Instead, \sysname uses the naive reduction factors from~\cite{selinger1979access} to estimate the output cardinality of each operator, denoted by $estimate(\carquery_i,K)$.  The noise $\eta_i$ added to the cardinality depends on the noise distribution of the truncated Laplace mechanism and we use the expectation of the distribution $\tlap(\epsilon_i,\delta_i,\Delta \carquery_i)$ to model the noise. Putting this together, the output cardinality $n_i$ at operator $o_i$ is estimated by $estimate(\carquery_i,K) + \mathbb{E}(\tlap(\epsilon_i,\delta_i,\Delta \carquery_i))$.

Based on this formulation, the cost of assigning privacy budget $P = \{(\epsilon_i,\delta_i) | i=1,\ldots,l\}$ to operators $Q=\{o_1,\ldots,o_l\}$ in the query tree is modeled as
\begin{equation}
C(P,K)  = \sum_{i=1}^l cost_{o_i}({\bf N}_i) + cost_{sort}(o_i({\bf N}_i)) + cost_{copy}(o_i({\bf N}_i),n_i)
\end{equation}
where $n_i= estimate(\carquery_i,K) + \mathbb{E}(\tlap(\epsilon_i,\delta_i,\Delta q_i))$.

\stitle{Optimization:} Using this cost model, we find the optimal solution to the following problem as the budget allocation strategy for Algorithm~\ref{algo:sw}.
\begin{eqnarray}
& \min_{P} C(P) ~~s.t. & \sum_{i=1}^{l} \epsilon_i \leq \epsilonper,  \sum_{i=1}^{l} \delta_i \leq \deltaper, \nonumber \\
&&\epsilon_i,\delta_i  \geq 0~~ \forall i=1,\ldots l
\end{eqnarray}
We show the detailed cost model including the unit cost function $cost_{o_i}({\bf N}_i, n_i)$, $cost_{sort}(o_i({\bf N}_i))$, and $cost_{copy}(o_i({\bf N}_i),n_i)$ for two different \mpc protocols in the next section. Using a convex optimization solver, we determine the optimal budget split that minimizes the cost objective function. We evaluate how it performs against the baseline approaches in Section~\ref{sec:results}. 



\eat{
\begin{problem}\label{problem:cost-model}
Given a privacy budget $(\epsilon,\delta)$, and a set of sequential operators of size $l$ in a query tree, $o_1,\ldots,o_l$, with each operator $o_i$ having a differentially-private cardinality $q_i$, find a partitioning $P = \{(\epsilon_i,\delta_i) | i=1,\ldots,l\}$ such that $C(P)$ is minimized, i.e.,
\begin{eqnarray}
\min_{P} & C(P)&  = \sum_{i=1}^{l} cost_{sw_i}(\epsilon_i,\delta_i,q_i) + cost_{op_i}(\epsilon_i,\delta_i,q_i) \nonumber \\
&s.t.& \sum_{i=1}^{l} \epsilon_i \leq \epsilon,  \sum_{i=1}^{l} \delta_i \leq \delta, \epsilon_i,\delta_i  \geq 0~~ \forall i=1,\ldots l, \nonumber
\end{eqnarray}
where
\begin{displaymath}
cost_{sw_i}(\epsilon_i,\delta_i,q_i) = cost_{sort}(\epsilon_i,\delta_i,q_i) + cost_{copy}(\epsilon_i,\delta_i,q_i)
\end{displaymath}
\end{problem}

With Problem~\ref{problem:cost-model}, we solve for an optimal partitioning $P$ of a given privacy budget $(\epsilon,\delta)$. Using the optimal partitioning, we assign a share of the privacy budget to each operator in a query tree. Once each operator has a share of the privacy budget, we can execute the query as shown in Algorithm~\ref{algo:sw}.

To construct the cost model in Problem~\ref{problem:cost-model}, \sysname requires the output cardinalities of each operator. Since \sysname does not have a priori knowledge of the output cardinalities, it must estimate the cardinalities. Cardinality estimation has seen a large amount of research~\cite{Astrahan1976} and typically depends on using statistics about the the underlying data to inform the estimation. However, in a private execution environment, such as in {\federation}s, many of those statistics cannot be collected without violating privacy guarantees. Instead, \sysname uses the naive reduction factors from~\cite{selinger1979access} to estimate the output cardinalities. With the estimated cardinalities, \sysname can calculate the expectation of the differentially private noise for the cost model. Using only public information, \sysname determines the optimal budget splitting for a given query tree. Note that since the number of operators in a query tree is relatively small, the runtime of convex optimization solvers does not become prohibitively large, with most queries requiring less than one second.

We now discuss {\sysname}'s cost model, privacy budget, and several budget splitting strategies used to allocate the budget among the intermediate query results in the query plan.
}

\eat{
\subsubsection{Privacy Budget}
In \sysname, the client or the \hb determine how much of the total privacy budget to allocate for execution and set the query output policy in order to achieve the desired trade-off between performance, privacy, and accuracy. To support \sysname's non-information theoretic guarantees, we maintain a privacy budget for each relation in the federation.  When this privacy budget is depleted, the table will no longer be able to use \sysname. For example, say that a  medical researcher's {\federation} query joins a medication table with a diagnoses table to find incidences of heart disease that were treated with Aspirin. The true cardinality of the join's output would reveal the number of patients with heart disease to all data owners participating in the query's evaluation over \mpc. Using a differentially private cardinality would combat this leakage by only revealing a noisy version of the cardinality. However, if the researcher continued writing queries that included the join over the same tables and selection criteria, then over time the noisy cardinalities may be averaged together to obtain the true cardinality. Thus differential privacy can only be applied to a given data set a limited number of times. 

\sysname tracks the privacy budget for each table separately. Whenever an operator uses a table, we deduct the privacy it used from the budget. In the example from Figure~\ref{fig:sensitivity-tree}, \sysname would deduct from the budgets of the medication and diagnosis tables. Our system maintains a single budget for each table, regardless of the query\textquotesingle s source. Other options, such as maintaining discrete budgets for different users have been explored in other systems~\cite{mcsherry2009privacy} and could be incorporated into \sysname.

The actual value of the privacy budget depends on the privacy parameters $\epsilon$ and $\delta$ shown in Definition~\ref{def:cdp}. The privacy parameters bound the amount of privacy leaked by \sysname for each intermediate result, with $\epsilon$ controlling the magnitude of the noise and $\delta$ handling the additive offset. For a given intermediate result, \sysname sets $\epsilon$ to ensure that enough noise is added to the cardinality and $\delta$ to guarantee that the noisy cardinality never falls below the true cardinality with high probability.
}

\change{
\section{Protocol Implementation}
\label{sec:implementation}
\sysname is a flexible computation layer that supports a wide range of {\federation}s and database engines. The only requirements are that the \federation process queries as operator trees and that the underlying database engine supports SQL queries.  \sysname implements a wide range of SQL operators for PDF queries, including selection, projection, aggregation, limit, and some window aggregates.  For joins, we handle equi-joins and cross products.  At this time we do not support outer joins or set operations. For output policy 2 where client sees noisy answers released from differentially private mechanisms, we support aggregate queries like {\tt COUNT} for the final data operator.

The fundamental design pattern of \mpc protocols is the circuit model, where functions are represented as circuits, i.e. a collection of topologically ordered gates. The benefit of this model is that by describing how to securely compute a single gate, we can compose the computation across all gates to evaluate any circuit, simplifying the design of the protocol. The downside is that circuits are difficult for programmers to reason about. Instead, the von Neumann-style Random Access Machine (RAM) model handles the impedance mismatch between programmers and circuits by creating data structures, such as ORAM, that utilize circuit-based oblivious algorithms for I$/$O. With the RAM model, programmers no longer have to write programs as circuits. Instead, they can think of their data as residing in secure arrays that can be plugged into their existing code. The downside of the RAM model is that since the data structures are more general purpose, performance lags behind direct circuit model implementations of the same function. In this work, we use \sysname with both RAM model and circuit model protocols, providing a general framework that can easily add new protocols to the query executor.


\subsection{RAM Model}
When integrating a new protocol, we first create an operator cost model that captures the execution cost of each operator. For a RAM based protocol, we model the cost of secure computations $cost_{o}({\bf N})$ for commonly used relational operators in Table~\ref{tbl:io-cost}, where ${\bf N}$ is the set of cardinalities of input tables of operator $o$. The cost function considered in this work focuses on the I$/$O cost of each relational operator in terms of reads and writes from a secure array.

We let $c_{read}(n)$ and $c_{write}(n)$ be the unit cost for reading and writing of a tuple from a secure array of size $n$ respectively. Secure arrays protect their contents from attackers by guaranteeing that all reads or writes access the same number of memory locations. Typically, the data structures will shuffle either their entire contents or their access path on each access, with different implementations providing performance ranging from $O(\log n)$ to $O(n\log (n)^2)$ per read or write~\cite{Arasu2014,Wang2014}. This variable I$/$O cost guarantees that an observer cannot look at memory access time or program traces to infer sensitive information.

Given the unit cost for reading and writing, we can represent the cost of secure computation for each operator as the sum of reading cost and writing cost. For example, the cost function for a Filter operator $cost_{Filter}({\bf N}=\{n_1\})$ is the cost of reading $n_1$ number of input records and writing $n_1$ number of output records. For a Join operator over two input tables, the cost function $cost_{Join}({\bf N}=\{n_1,n_2\})$ equals to the sum of $n_1$ number of records reading from the first secure array of size $n_1$, $(n_1*n_2)$ number of record reading from the second array of size $n_2$, and  $(n_1*n_2)$ number of record writes, i.e. $n_1* c_{read}(n_1) + n_1*n_2* c_{read}(n_2) + n_1*n_2* c_{write}(n_1* n_2)$. The other operators are modeled similarly in Table~\ref{tbl:io-cost}. The cost function for copying from an array of size $n$ to an array of size $n'$ can be modeled as $cost_{copy}(n,n') = n' * c_{read}(n) + n'*c_{write}(n')$.

In this work, we use ObliVM~\cite{Liu2014} to implement our RAM based relational operators, under a SMCQL~\cite{Bater2017} \federation design. We chose ObliVM due to its compiler, which allows SMCQL to easily convert operators written as C-style code into \mpc programs. However, \sysname can support more recent ORAM implementations such as DORAM~\cite{Doerner2017} by summarizing their cost behavior as in Table~\ref{tbl:io-cost} and using those operator costs in our cost model.  In SMCQL, private intermediate results are held in secure arrays, i.e., oblivious RAM (ORAM), and operators are compiled as circuits that receive an ORAM as input, evaluate the operator, and output the result ORAM to the next operator. With \sysname, we resize the intermediate ORAM arrays passed between operators by obliviously eliminating excess dummy tuples, reducing the intermediate cardinalities of our results and improving performance. Any ORAM implementation is compatible with \sysname, as long as its I$/$O characteristics can be similarly defined as in Table~\ref{tbl:io-cost}.

\begin{table}
\centering
{\small
\begin{tabular}{|l|c|}
\hline
  Operator & $cost_{o}({\bf N}=\{n_1,n_2,\ldots\})$\\
\hline
{\it Filter} & $n_1 * c_{read}(n_1) + n_1 * c_{write}(n_1)$ \\
{\it Join} &  $n_1 * c_{read}(n_1) + n_1 * n_2 *c_{read}(n_2)$\\
& $+ n_1 * n_2 * c_{write}(n_1 * n_2)$ \\
{\it Aggregate} & $n_1 * c_{read}(n_1) + c_{write}(n_1)$ \\
{\it Aggregate, Group By} & $n_1 * c_{read}(n_1) + n_1 * c_{write}(n_1)$ \\
{\it Window Aggregate} & $n_1 * c_{read}(n_1) + n_1 * c_{write}(n_1)$ \\
{\it Distinct} & $n_1 * c_{read}(n_1) + n_1 * c_{write}(n_1)$ \\
{\it Sort} & $n_1 * \log^2 (n_1) * (c_{read}(n_1) + c_{write}(n_1))$\\
\hline
\end{tabular}
}
\caption{Operator I$/$O cost} \label{tbl:io-cost}
\end{table}

\subsection{Circuit Model}

Alternatively, if a \federation uses a circuit based protocol to express database operators, we can estimate the cost of an operator $o$, $cost_{o}({\bf N})$,  with $n_{gates}$ gates in its circuit, a maximum circuit depth of $d_{circuit}$, $n_{in}$ elements of input, and an output size of $n_{out}$ as: $$cost_{op}({\bf N}) = c_{in}*n_{in} + c_g*n_{gates} + c_d * d_{circuit} + c_{out} * n_{out},$$
where $c_{in}$ and $c_{out}$ account for the encoding and decoding costs needed for a given protocol,  $c_{g}$ and $c_{d}$ account for the costs due to the gate count and circuit depth respectively, the number of input element is $n_{in}=\sum_{n_j \in {\bf N}} n_j$ and the number of output element is $n_{out} = \prod_{n_j \in {\bf N}} n_j$.


The generic nature of this model allows us to capture the performance features of all circuit-based \mpc protocols. For example, in this work we utilize EMP toolkit~\cite{emp-toolkit}, a state of the art circuit protocol, in conjunction with \sysname.  As EMP has not been used in {\federation}s, we implemented EMP with SMCQL. Instead of compiling operators into individual circuits, EMP compiles a SQL query into a single circuit, though it still pads intermediate results within the circuit. In this setting, we can still apply \sysname as we did in the RAM model, where we minimize the padding according to our differential privacy guarantees. Instead of modifying the ORAM, we directly modify the circuit to eliminate dummy tuples.

For both our RAM model and our circuit model implementations, \sysname uses a cost model to estimate the execution cost of intermediate result padding. With the model, we can carry out our privacy budget optimization and allocation as seen in Section~\ref{sec:optimization}. {\sysname}'s protocol agnostic design allows it to optimize the performance of \mpc, regardless of the implementation.
\begin{table*}
\scriptsize
\centering
\begin{tabular}{|l|l|}
\hline
{\bf Name}  & Query \\
\hline
	{\it Dosage Study} & {\tt\scriptsize SELECT DISTINCT d.pid FROM diagnoses d, medications m WHERE d.pid = m.pid AND medication = 'aspirin' AND}\\
	& {icd9 = 'circulatory disorder' AND dosage = '325mg'}\\
	\hline
{\it Comorbidity} & {\tt\scriptsize SELECT diag, COUNT(*) cnt FROM diagnoses WHERE pid $\in$ cdiff\_cohort $\land$ diag $<>$ 'cdiff' ORDER BY cnt  DESC LIMIT 10;} \\
\hline
{\it Aspirin} & {\tt\scriptsize SELECT COUNT(DISTINCT pid) FROM diagnosis d JOIN medication m on d.pid = m.pid}\\
{\it Count} &   {\tt\scriptsize JOIN demographics demo on d.pid = demo.pid WHERE d.diag = 'heart disease' AND m.med = 'aspirin' AND d.time <= m.time; } \\
\hline
{\it 3-Join} & {\tt\scriptsize SELECT COUNT(DISTINCT pid) FROM diagnosis d JOIN medication m on d.pid = m.pid}\\
 &   {\tt\scriptsize JOIN demographics demo on d.pid = demo.pid JOIN demographics demo2 ON d.pid = demo2.pid}\\
 &  {\tt\scriptsize WHERE d.diag = 'heart disease' AND m.med = 'aspirin' AND d.time <= m.time; } \\
\hline
\end{tabular}
\caption{HealthLNK query workload.}
\label{tbl:operator-summary}
\vspace{-2.9mm}
\end{table*}

\subsection{M-Party Support} \label{sec:m-party}
In this work, we implement \sysname using two party \mpc protocols, meaning that we run our experiments over two data owners. However, \sysname supports additional data owners. By leveraging the associativity of our operators, we can convert all $m$ party computations into a series of binary computations. For example, assume we want to join to tables $R$ and $S$, which are horizontally partitioned across 3 data owners as $R1, R2, R3$, and $S1, S2, S3$. We can join across the 3 data owners as: $(R1 \bowtie S1) \cup (R1 \bowtie S2) \cup (R3 \bowtie S3) \cup (R2 \bowtie S1) \cup (R2 \bowtie S2) \cup (R2 \bowtie S3) \cup (R3 \bowtie S1) \cup (R3 \bowtie S2) \cup (R3 \bowtie S3)$. We can execute this series of binary joins using two party secure computation then union the results. With this construction, we can scale \sysname out to any number of parties. However, this algorithm is not efficient due to naively carrying out $m^2$ secure computation operations for every $m$ parties we compute over. We can use $m$ party \mpc protocols~\cite{Asharov2013,Beaver1990,Goldreich1987}, which have fewer high-level operators and made large performance strides, but these protocols are still expensive. We leave performance improvements in this setting for future work.
}



\section{Results}
\label{sec:results}

\change{We now look at \sysname performance using both real world and synthetic datasets and workloads. First, we cover our experimental design and setup. Next, we evaluate the end to end performance of our system and show the performance vs privacy trade-off under both true and differentially private answer settings. Then, we look at how \sysname performs at the operator level and the accuracy of our cost model, followed by a discussion of our budget splitting strategies. Finally, we examine how \sysname scales to larger datasets and more complex queries.}

\subsection{Experimental Setup}
For this work, we implemented \sysname on top of an existing healthcare database federation that uses SMCQL~\cite{Bater2017} for \mpc. This \federation serves a group of hospitals that wish to mine the union of their electronic health record systems for research while keeping individual tuples private. 

A clinical data research network or CDRN is a consortium of healthcare sites that agree to share their data for research. CDRN data providers wish to keep their data private. We examine this work in the context of HealthLNK~\cite{healthlnkirb}, a CDRN prototype for Chicago-area healthcare sites. This repository contains records from seven Chicago-area healthcare institutions, each with their own member hospitals, from 2006 to 2012, totaling about 6 million records. The data set is selected from a diverse set of hospitals, including academic medical centers, large county hospitals, and local community health centers.

Medical researchers working on HealthLNK develop SQL queries and run them on data from a set of healthcare sites. These researchers are the clients of the \federation and the healthcare sites are the private data owners. 

\stitle{Dataset:} We evaluate \sysname on medical data from two Chicago area hospitals in the HealthLNK data repository~\cite{healthlnkirb} over one year of data. This dataset has 500,000 patient records, or 15 GB of data. \change{For additional evaluation, we generate synthetic data up to 750 GB based on this source medical data.} To simplify our experiments, we use a public patient registry for common diseases that maintains a list of anonymized patient identifiers associated with these conditions. We filter our query inputs using this registry.

\stitle{Query Workload:}
For our experiments, we chose three representative queries based on clinical data research protocols~\cite{hernandez2015adaptable, cdiff2015irb} and evaluate \sysname on this workload using de-identified medical records from the HealthLNK data repository.  We also generate synthetic versions of {\it Aspirin Count} that contain additional join operators. We refer to these queries by the number of join operators, e.g., 3-Join for 3 join operators. The queries are shown in Table~\ref{tbl:operator-summary}.

\stitle{Configuration:} \sysname query processing runs on top of PostgreSQL 9.5 running on Ubuntu Linux. We evaluated our two-party prototype on 6 servers running in pairs. The servers each have 64 GB of memory, 7200 RPM NL-SAS hard drives, and are on a dedicated 10Gb/s network. Our results report the average runtime of three runs per experiment. We implement \sysname under both the RAM model (using ObliVM~\cite{Liu2015}) and circuit model (using EMP\cite{emp-toolkit}). For most experiments, we show results using the RAM model, though corresponding circuit model results are similar. Unless otherwise specified, the results show the end-to-end runtime of a query with output policy 1, i.e., true results.


\subsection{End-to-end Performance}
In our first experiment, we look at the the end to end performance of \sysname compared to baseline {\federation} execution. For execution, we set $\epsilon$ to 0.5, $\delta$ to 0.00005, and use the optimal budget splitting approach. 

\begin{figure}[!t]
\centering
  \begin{subfigure}[b]{0.8\columnwidth}
  \centering
    \includegraphics[width=\columnwidth]{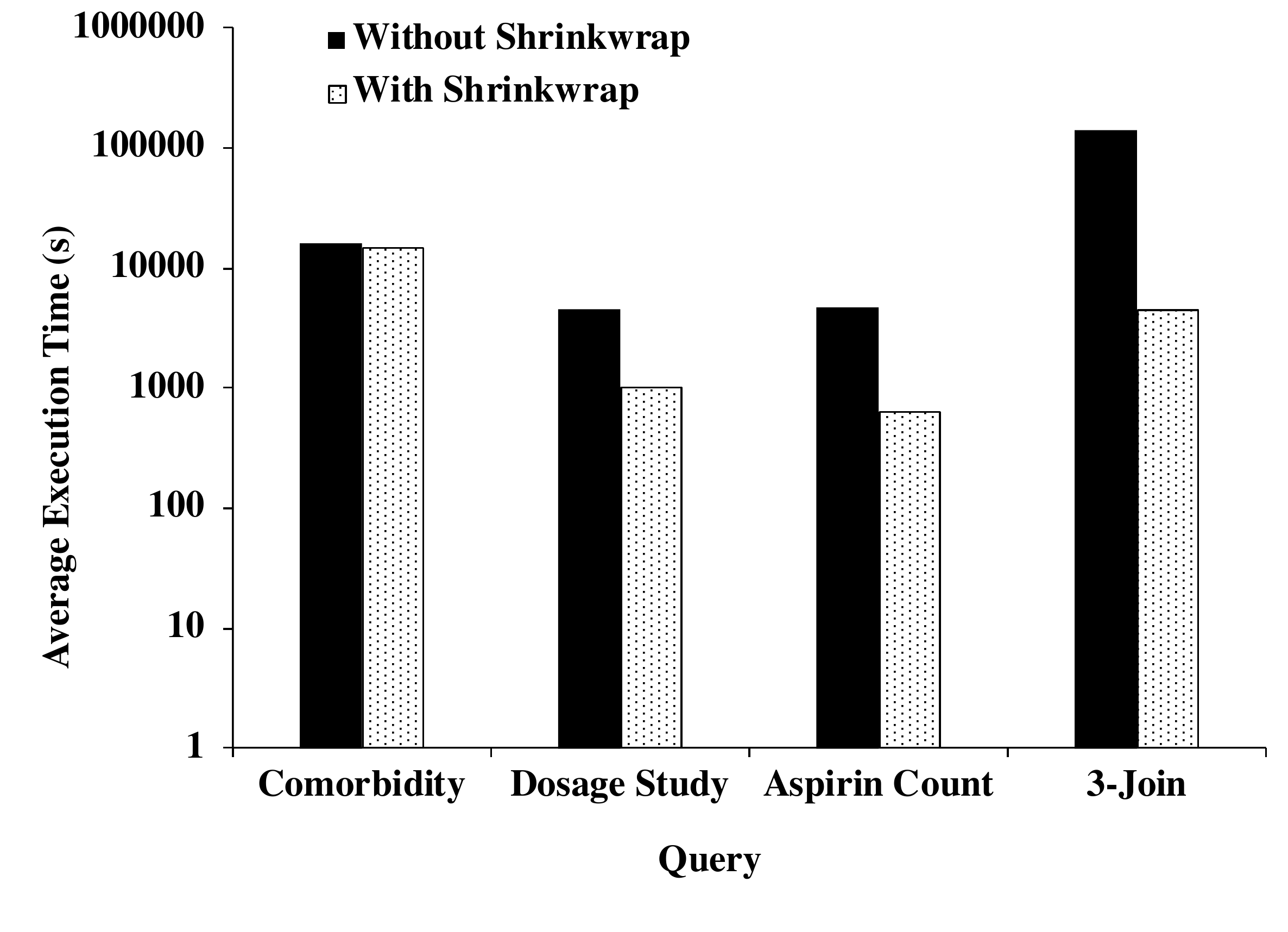}
    \caption{RAM Model (ObliVM), }
    \label{fig:end-to-end-ram}
  \end{subfigure}
  \begin{subfigure}[b]{0.8\columnwidth}
  \centering
    \includegraphics[width=\columnwidth]{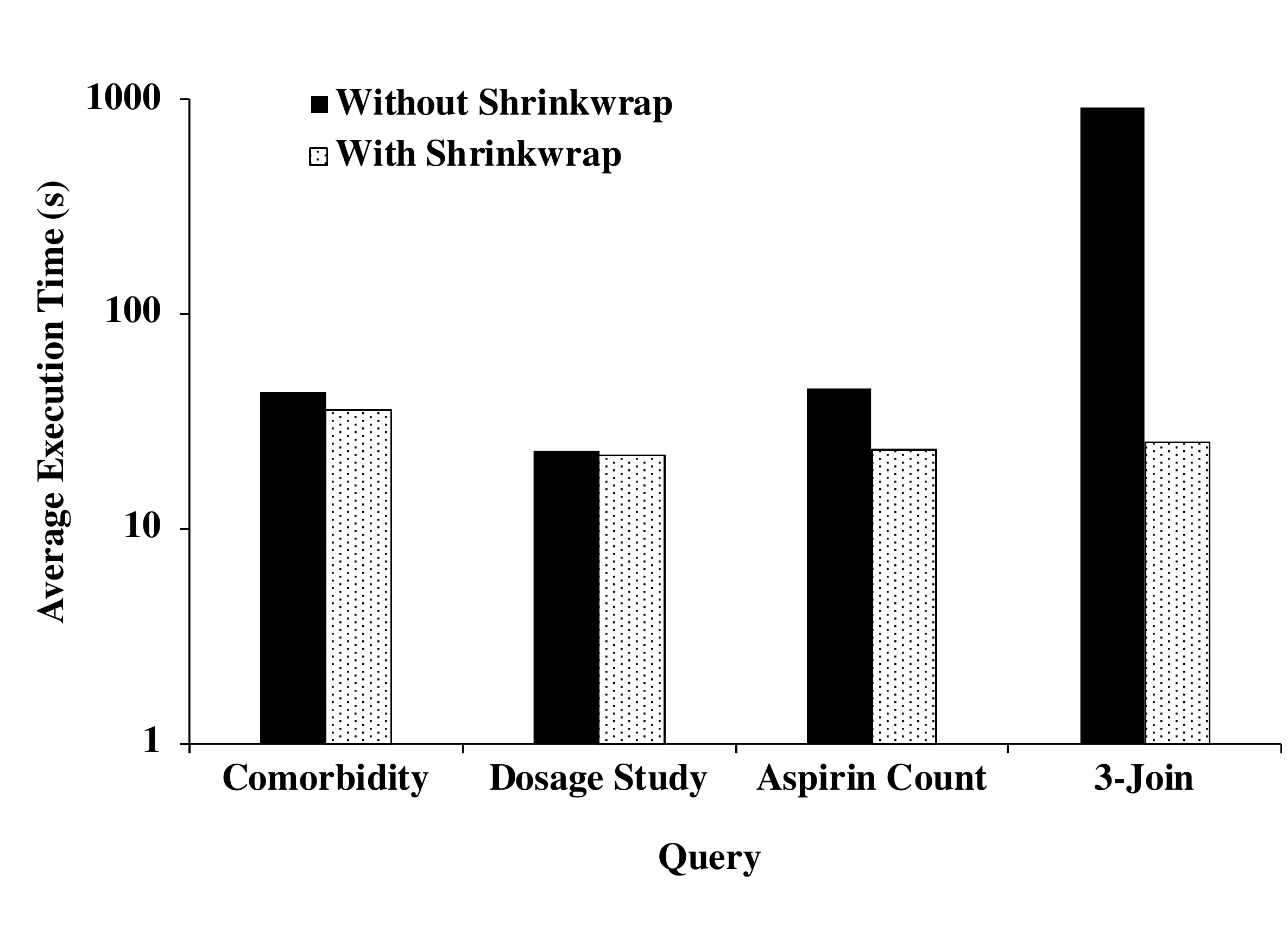}
    \caption{Circuit Model (EMP)}
    \label{fig:end-to-end-emp}
  \end{subfigure}
  \caption{End-to-End Shrinkwrap Performance for All Queries at $\epsilon$ = 0.5, $\delta$ = .00005.}
  \label{fig:end-to-end}
\end{figure}

\change{In Figure~\ref{fig:end-to-end}, we look at the overall performance of four queries under \sysname for both RAM model and circuit model \mpc. For {\it Comorbidity}, the execution does not contain any join operators, meaning that there is no explosion in intermediate result cardinality, so \sysname provides fewer benefits. {\it Dosage Study} contains a join operator with a parent distinct operator. Applying differential privacy to the output of the join improves performance by close to 5x under the RAM model, but sees fewer benefits under the circuit model. {\it Aspirin Count} contains two joins and sees an order of magnitude improvement under the RAM model and a 2x improvement under the circuit model. Finally, {\it 3-Join} has an enormous cardinality blow-up, allowing \sysname to improve its performance by 33x under the RAM model and 35x under the circuit model. In Section~\ref{sec:join-scaling}, we examine this effect of join operators on performance. Although implementation in the circuit model outperforms the RAM model by an order of magnitude, Shrinkwrap gives similar efficiency improvements in both systems.}


\subsection{Privacy, Performance, and Accuracy}

We consider the performance vs privacy trade-off provided by \sysname and evaluate the effect of the execution privacy budget on query performance. For this experiment, we use the {\it 3-Join} query with the optimal budget splitting approach, fix the total privacy budget $\epsilon=1.5$ and $\delta=0.00005$, and use output setting 2, i.e. differentially private final answers. We vary the privacy budget for performance $\delta_{1-l}$ from 0.1 to 1.5  and fix $\deltaper = \delta$. The remaining privacy budget $\epsilon-\epsilonper$ is spent on the query output using Laplace mechanism.

\begin{figure}[!t]
\centering
  \begin{subfigure}[b]{0.8\columnwidth}
  \centering
\includegraphics[width=\columnwidth]{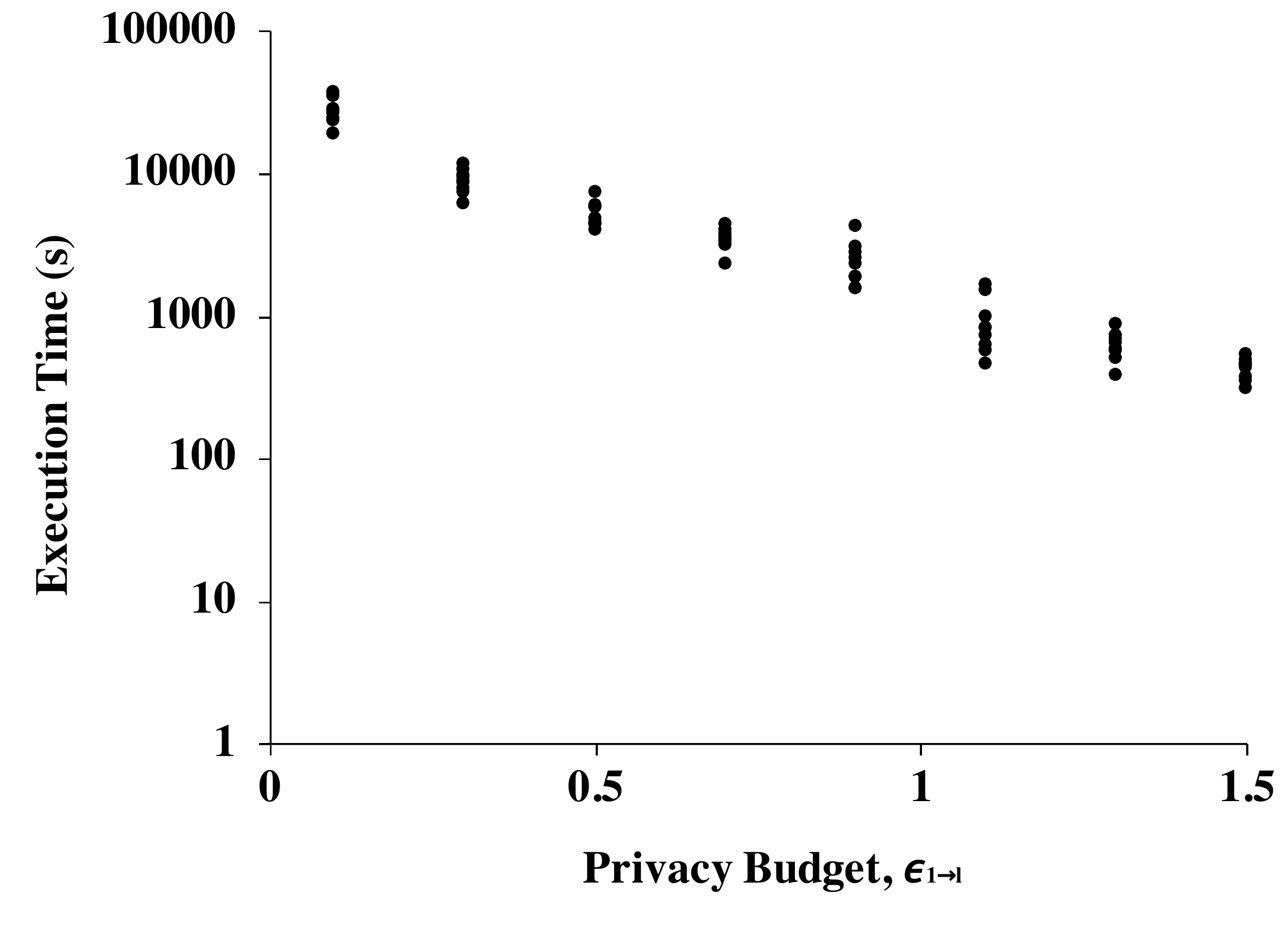}
\caption{Performance v.s. $\epsilonper$.}
\label{fig:tradeoff}
  \end{subfigure}
\begin{subfigure}[b]{0.8\columnwidth}
 \centering
\includegraphics[width=\columnwidth]{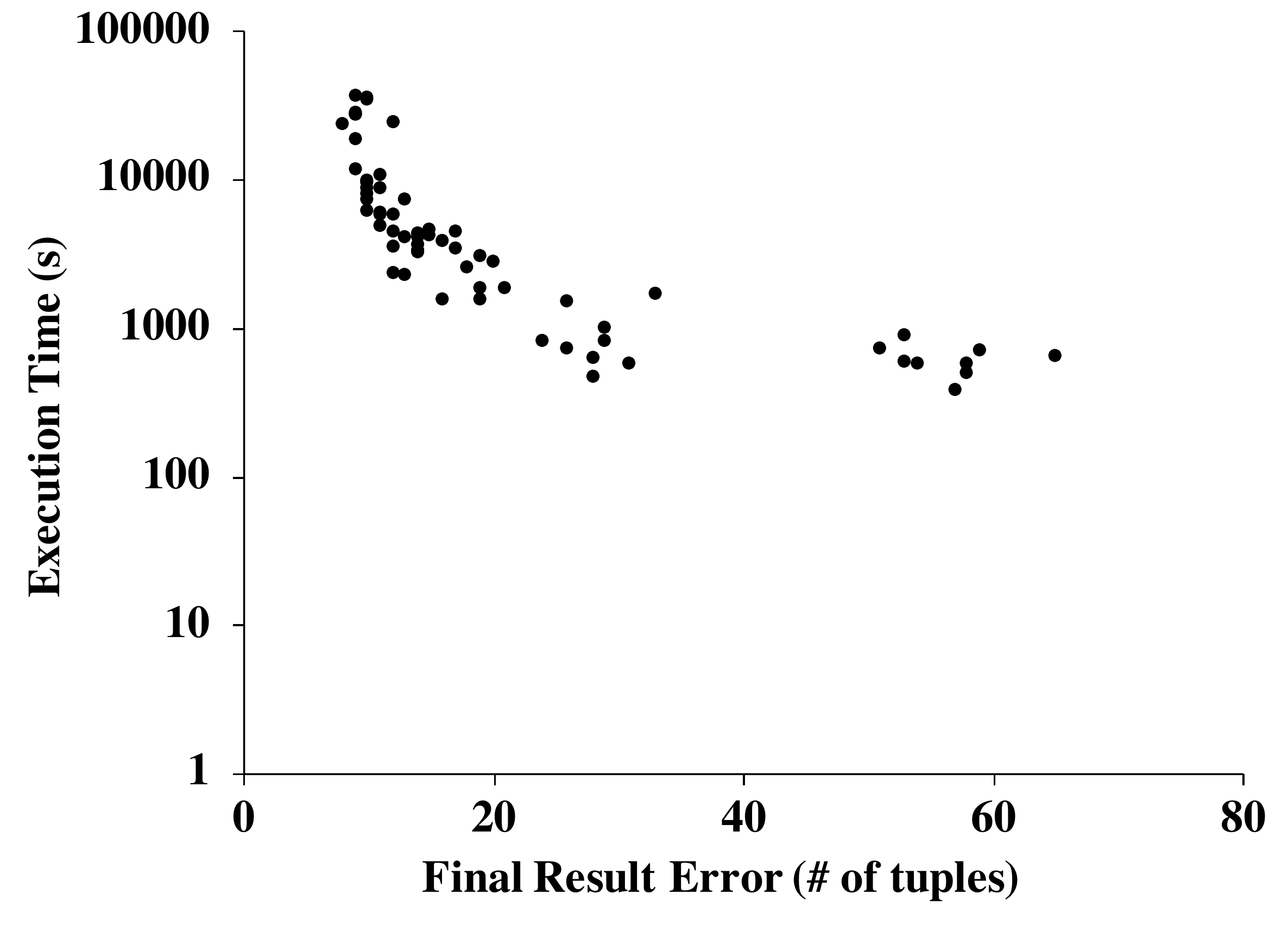}
\caption{Performance v.s. Error.}
\label{fig:accuracy}
  \end{subfigure}
  \caption{Privacy, Performance, and Accuracy Tradeoffs. Computed using different levels of privacy budget for performance $\epsilonper\in \{0.1,...,1.5\}$. Executed using RAM Model, {\it 3-Join}, $\epsilon=1.5, \delta= 0.00005$.}
\end{figure}

In Figure~\ref{fig:tradeoff}, we see the execution time as a function of the privacy budget for performance $(\epsilonper)$. As the privacy budget for performance $\epsilonper$ increases, the execution time decreases. A lower budget means that more noise must be present in the intermediate result cardinalities, which translates to larger cardinalities and higher execution time variance. We know that I$/$O cost dominates the \sysname cost model, so larger cardinalities mean more I$/$O accesses, which causes lower performance.


\change{
  In Figure~\ref{fig:accuracy}, we show the accuracy versus privacy trade-off by varying the privacy budget for performance $\epsilonper$. We report the execution time and the error introduced to the final query answer at various privacy budget for performance $\epsilonper< 1.5$. Out of a total possible output size of 5500 tuples, the query output is $10/5500=0.18\%$ of the total possible output size. If we allow additive noise of $0.36\%$ ($20/5500$), we get a 100x performance improvement. Since the total privacy $\epsilon$ is fixed, the more we spend for performance (larger $\epsilonper$), the less privacy budget we have at the output, i.e., as our execution time improves, our output noise increases. 

}


\subsection{Evaluating Budget Splitting Strategies} \label{sec:budget-splitting}
We first examine the relative performance of the three budget splitting strategies introduced in Section~\ref{sec:implementation}: uniform, eager, and optimal.  Recall that uniform splits the budget evenly across all the operators, eager inserts the entire budget at the first operator, and uses the \sysname cost model, along with cardinality estimates, to identify an optimal budget split.  \change{We also include an \textit{oracle} approach that shows the performance of \sysname if true cardinalities are used in the cost model to split the budget instead of cardinality estimates. Note that \textit{oracle} does not satisfy differential privacy, but gives an upper bound on the best performance achievable through privacy budgeting.} For this experiment, we use the {\it Aspirin Count} and {\it 3-Join} queries since they contain multiple operators where \sysname generates differentially-private cardinalities and set the privacy parameters usable during execution to $\epsilon$ = 0.5 and $\delta$ = .00005.

\begin{figure} [t]
\centering
\includegraphics[width=0.8\columnwidth]{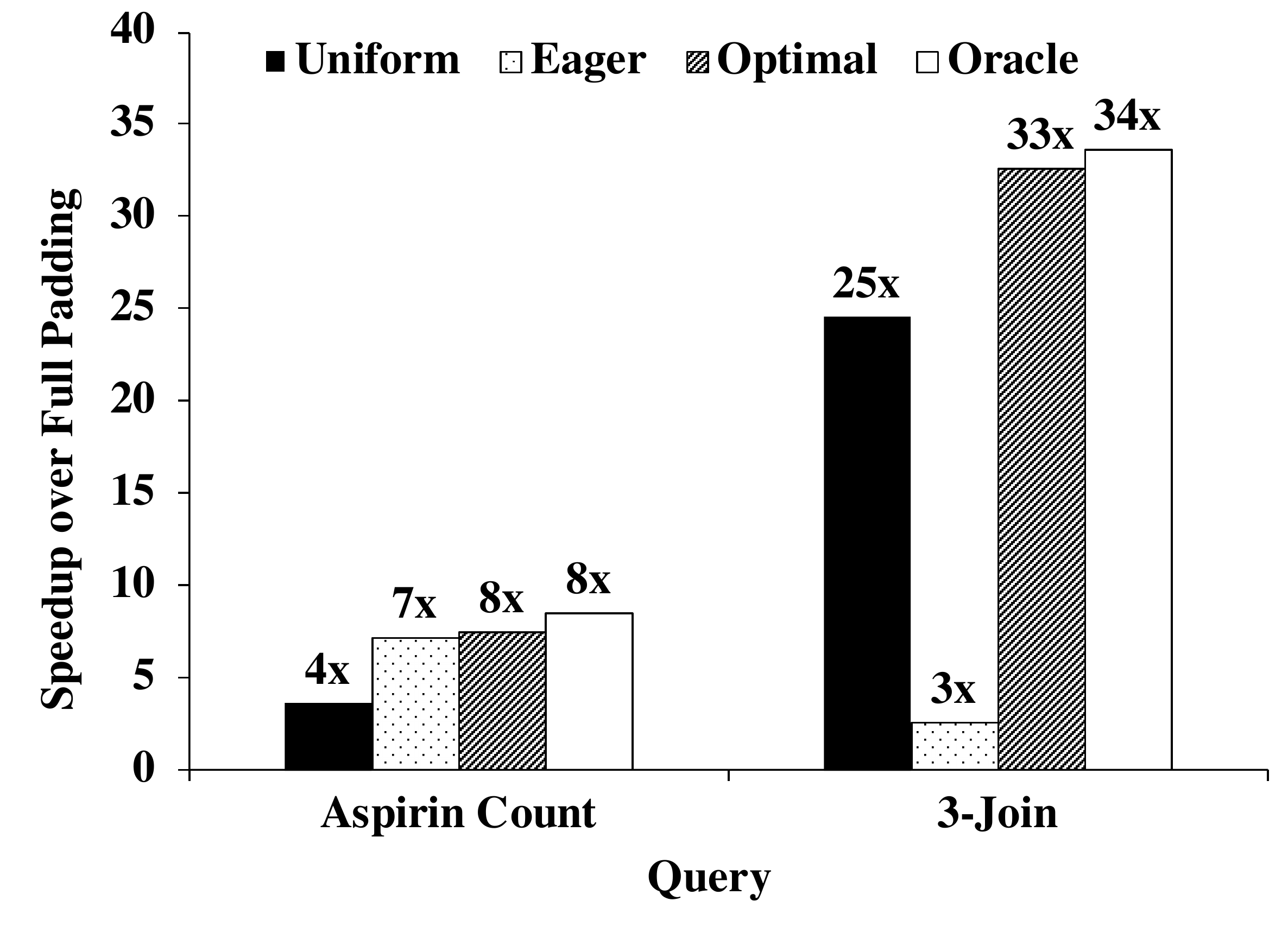}
\caption{Shrinkwrap execution speedup over baseline using different budget strategies. Executed using RAM model, $\epsilon$ = 0.5, $\delta$ = .00005}
\label{fig:speedup}
\end{figure}

Figure~\ref{fig:speedup} displays the relative speedup of all three approaches and the oracle for both queries over the baseline, fully padded, {\federation} execution. All three \sysname approaches provide significant performance improvements, ranging from 3x to 33x. As expected, optimal performs best for both queries. We also see that eager performs better for {\it Aspirin Count}, while uniform performs better for {\it 3-Join}. 

The benefits of minimized intermediate result cardinalities cascades as those results flow through the operator tree. By using the entire budget on the first join operator in {\it Aspirin Count}, the eager approach maximizes the effect of the intermediate result cardinality reduction, and as a result, outperforms the uniform approach. However, for the {\it 3-Join} query, the presence of the additional join operator overrides the cascading cardinality benefit of the first join operator. The uniform approach outperforms the eager approach by ensuring that all three of the join operators receive differentially private cardinalities. The optimal approach outperforms both eager and uniform by combining the best of both worlds. Optimal applies differential privacy to all operators, like uniform, but it uses a larger fraction of the budget on earlier operators, like eager.        

\change{Looking at the oracle approach, we can evaluate the accuracy of our cost model. For privacy reasons, our cost model does not use the true cardinalities. Here, we see that the optimal approach, which uses estimated cardinalities, and the oracle approach, which uses true cardinalities, provide similar performance. In experiments using \textit{Aspirin Count}, we calculate the correlation coefficient between the true execution time and the estimated execution time based on our cost model as .998 for the circuit model and .931 for the RAM model, given a $\epsilon$ = 0.5, $\delta$ = .00005. We see that the \sysname cost model provides a reasonably accurate prediction of the true cost.}

\subsection{Operator Breakdown}

Now, we look at the execution time by operator to see where \sysname provides the largest impact. We include only private operators in the figure. For this experiment we use the {\it Aspirin Count} query with $\epsilon$ = 0.5 and $\delta$ = 0.00005. We show the execution times for the baseline, fully padded approach for intermediate results, as well as the uniform, eager, and optimal budget splitting approaches. 

\begin{figure} [t]
\centering
\includegraphics[width=0.8\columnwidth]{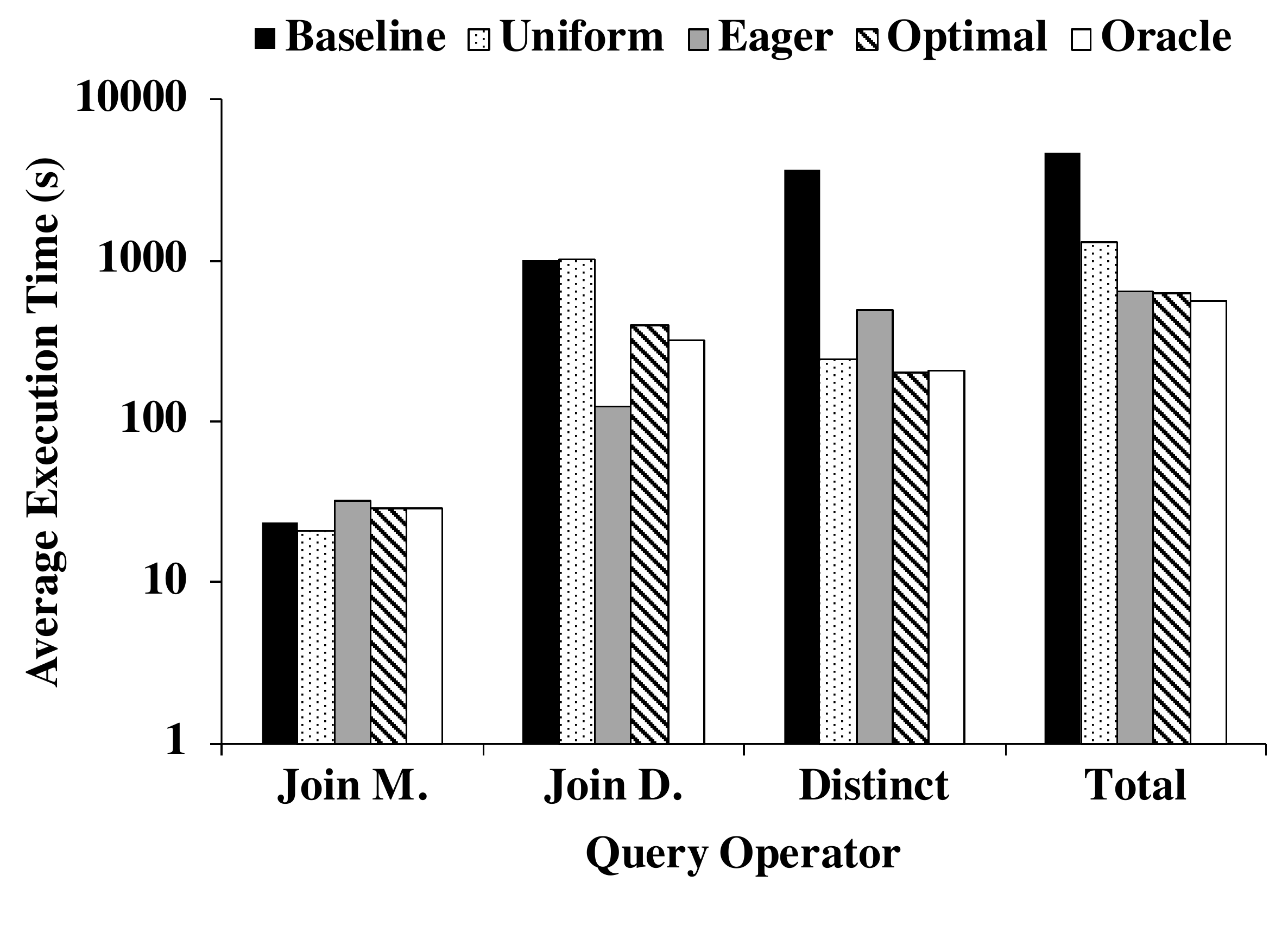}
\caption{Per operator performance using different budget strategies. Executed using \textit{Aspirin Count}, RAM model, $\epsilon$ = 0.5, $\delta$ = .00005}
\label{fig:aspirin-operator}
\end{figure}

Figure~\ref{fig:aspirin-operator} shows the execution times of each operator in the {\it Aspirin Count} query. Note that \sysname does not apply any differential privacy until the output of Join M. (Join on Medications), so we see no speed-up in the actual Join M. operator, only in the later query operators. 

The first operator to see a benefit from \sysname is Join D. (Join on Demographics), the second join in the query. \sysname generates a differentially private cardinality for the output of the previous join, Join M., according to the selected budget splitting approach. For uniform, the allocated budget is not large enough to generate a differentially-private cardinality that is smaller than the fully padded cardinality. In fact, the overhead of calculating the differentially-private cardinality actually causes the operator runtime to go above the baseline runtime. For eager, \sysname spends the entire budget on the output of the first join and reduces the intermediate cardinality by an order of magnitude. The optimal approach uses enough budget to reduce the intermediate cardinality, but not as much as eager.   

The next operator, DISTINCT, also sees a large performance benefit. All three budget splitting approaches reduce the runtime for DISTINCT by about an order of magnitude. Since the baseline intermediate cardinality for DISTINCT is much larger, the uniform approach generates a differentially-private cardinality that reduces the intermediate result size and improves performance. The eager approach does not have any additional budget to use, but the effect of the previous intermediate cardinality cascades through the query tree and reduces the execution time for DISTINCT as well. Finally, the optimal approach applies its remaining budget to see a significant performance improvement.

In the {\it Aspirin Count} query, the three budget splitting approaches allocate the budget between three operators and provide an insight into the trade-off between early and late budget allocation. We see wildly varying execution times for each of the operators, with different operators providing the bulk of the performance cost depending on the budget strategy. Here, the optimal strategy gives the best performance. The substantial variance in execution for these operators demonstrates the value of the added accuracy that our I$/$O cost model provides. 

\subsection{Join Scale Up} \label{sec:join-scaling}
We now look at how \sysname scales with more complex workloads. From our experiments, we know that the number of joins in a query has an extremely large impact on execution time. \change{More complex workloads typically contain more nested statements and require more advanced SQL processing, but can be broken down into a series of simpler statements. This work focuses on complexity as a function of the join count in order to target the largest performance bottleneck in a single SQL statement.}

In our experiment, we scale the join count and measure the performance. Since \sysname applies differential privacy at the operator outputs, we measure the performance impact of join operators by looking at the execution time of their immediate parent operator. For consistency, we truncate the number of non-inherited input tuples of each join to equal magnitudes and use $\epsilon$ = 0.5 and $\delta$ = 0.00005 for all runs. 

From Figure~\ref{fig:join-scaling}, we see how execution time scales as a function of the join count. As the number of joins increases, the execution time also increases. To find the source of the performance slowdown, we can look at the \sysname cost model. For each join operator, the baseline {\federation} fully pads the output. If the join inputs were both of size $n$, then the output would be $n^2$. As the number of joins increase, the cardinality rises from $n$ to $n^2$ to $n^3$. The exponential increase in output cardinality requires significantly more I$/$O accesses, which cause the large slow down in performance. \sysname applies differential privacy to reduce the output size at each join, reducing the magnitude of each join output. While \sysname still sees a significant growth in execution time as a function of the join count, the overall performance is orders of magnitudes better.   

\begin{figure}
\centering
\includegraphics[width=0.8\columnwidth]{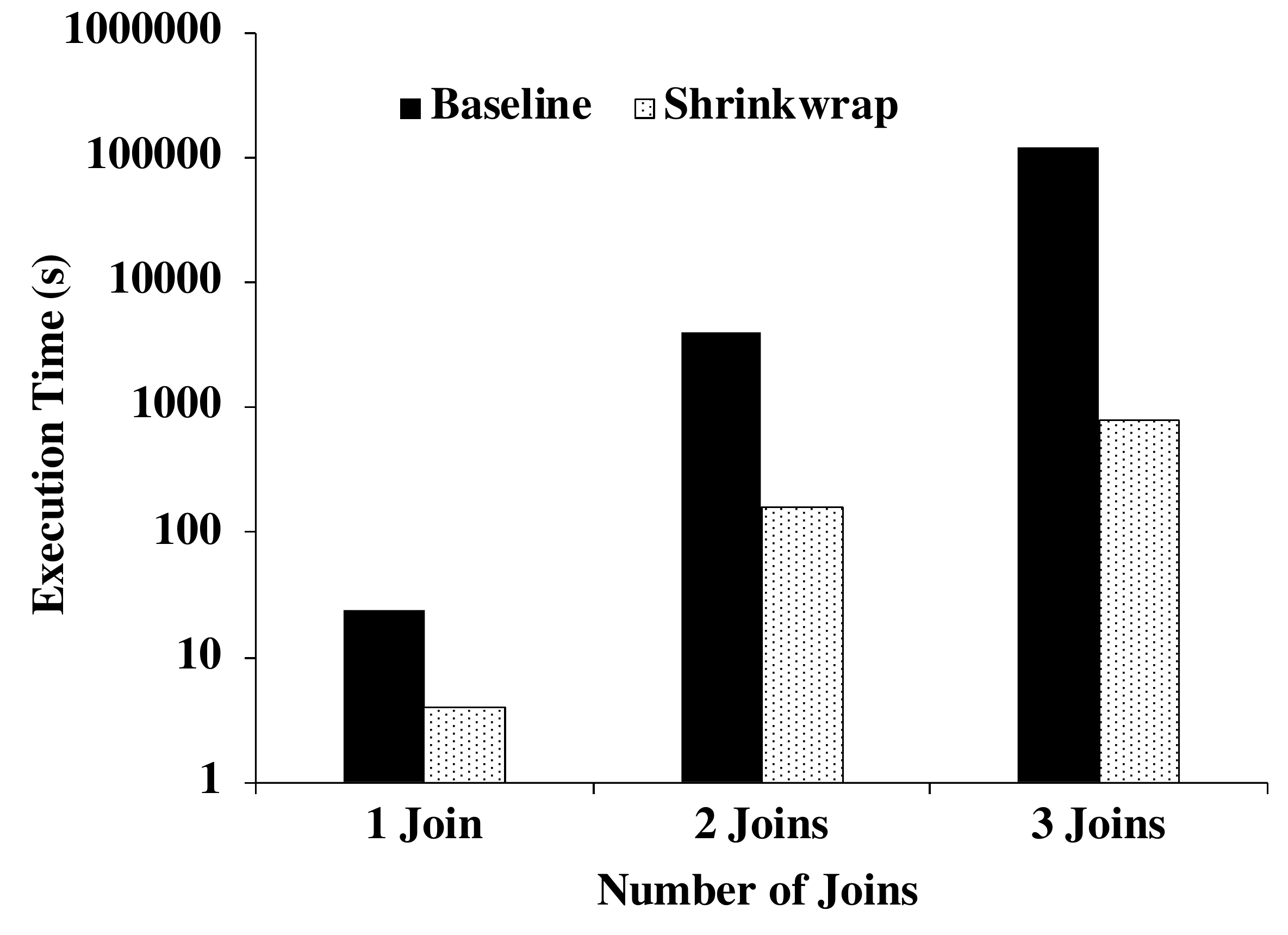}
\caption{\textit{Aspirin Count} with synthetic join scaling. Executed using RAM model. $\epsilon$ = 0.5, $\delta$ = .00005}
\label{fig:join-scaling}
\end{figure}

\eat{
\begin{figure}
\centering
  \begin{subfigure}[b]{0.45\textwidth}
  \centering
\includegraphics[width=0.8\textwidth]{figures/join-scaling}
\caption{\textit{Aspirin Count} with synthetic join scaling. Executed using RAM model. }
\label{fig:join-scaling}
  \end{subfigure}
  \begin{subfigure}[b]{0.45\textwidth}
  \centering
\includegraphics[width=0.8\textwidth]{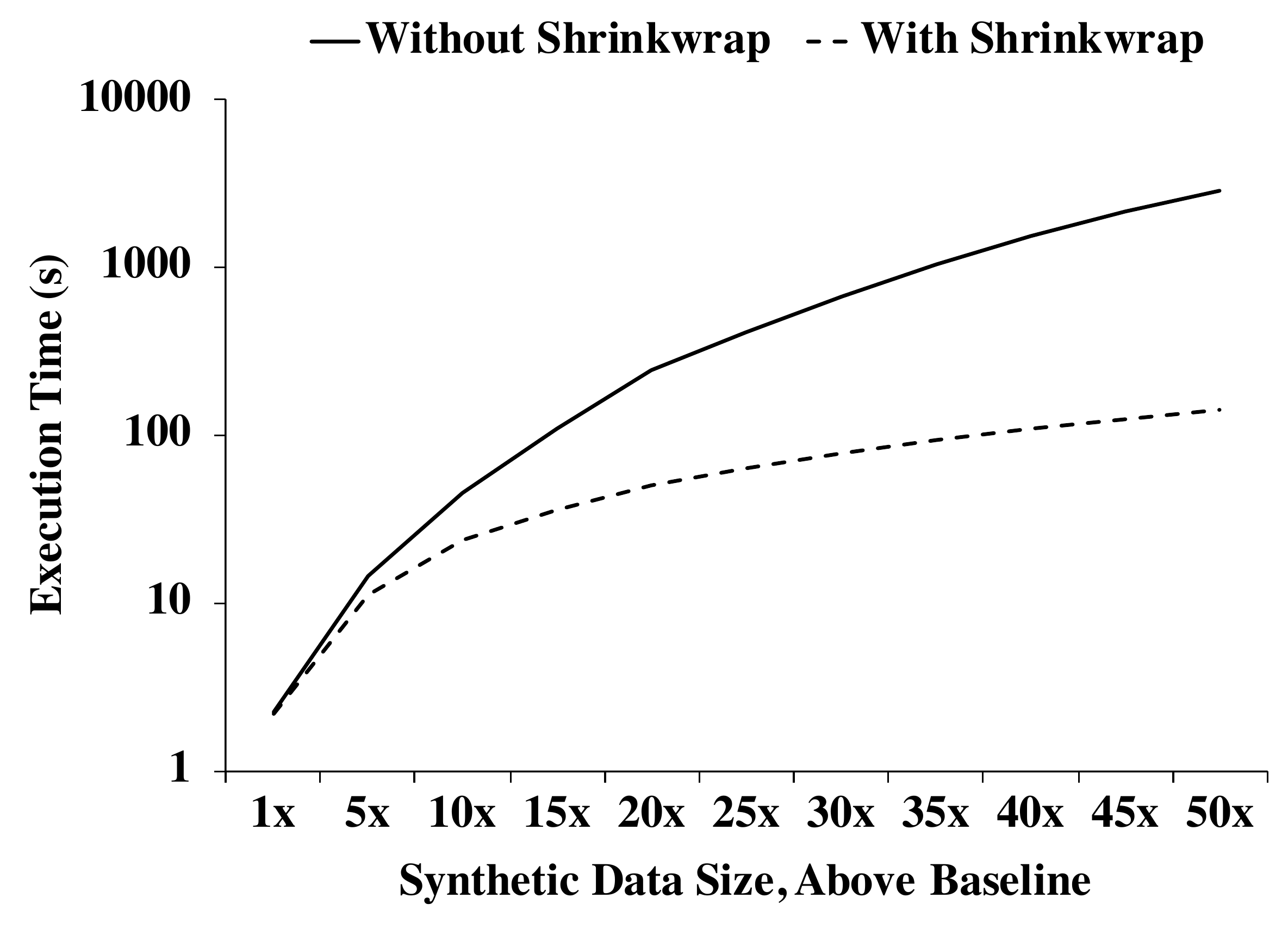}
\caption{\textit{Aspirin Count} with synthetic data scaling. Executed using Circuit model.}
\label{fig:emp}
  \end{subfigure}
  \caption{Join and Data Scaling with Shrinkwrap,  $\epsilon$ = 0.5, $\delta$ = .00005 }
  \label{fig:scale}
\end{figure}
}

\change{
\subsection{Data Size Scale Up} \label{sec:data-scaling}
We evaluate the performance of \sysname with increasing data sizes. We leverage the higher efficiency of our circuit model protocol, EMP~\cite{emp-toolkit}, to examine the effect of \sysname on larger input data sizes. Without \sysname, the RAM model protocol, ObliVM~\cite{Liu2015}, needs 1.3 hours to run the \textit{Aspirin Count} query and at 50x, 65 hours to complete. Instead, we use EMP which completes in 15 minutes.

\begin{figure}[!t]
\centering
\includegraphics[width=0.8\columnwidth]{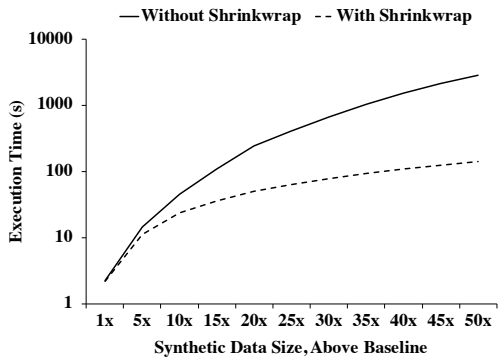}
\caption{\textit{Aspirin Count} with synthetic data scaling. Executed using Circuit model. $\epsilon$ = 0.5, $\delta$ = .00005} \label{fig:emp}
\end{figure}

 Figure~\ref{fig:emp} shows the execution time both with and without \sysname. We used a limited implementation of EMP~\cite{emp-toolkit} and our {\it Aspirin Count} reference query. We generated synthetic data that duplicates the original tables up to 50 times the original 15 GB data size, giving us a maximum effective data size of 750 GB. With Figure~\ref{fig:emp}, we see that \sysname still provides a significant improvement during execution. As the data size grows, our performance improvement grows as well, reflecting both the power and flexibility of \sysname. 
}


\section{Related Work}
\label{sec:related}

Within the literature, different approaches exist to improve query processing in {\federation}s, such as databases based on homomorphic encryption, TEEs, differential privacy, and cloud computation~\cite{Arasu2013,Laur2013,Laur2011,mcsherry2009privacy,Popa2011,Xu2018}. \sysname, on the other hand, provides general-purpose, hardware-agnostic, \textit{in-situ} SQL evaluation with provable privacy guarantees and exact results.

PINQ~\cite{mcsherry2009privacy} introduced the first database with differential privacy, along with privacy budget tracking and sensitivity analysis for operator composition. We extend this work by applying its privacy calculus to {\federation}s. Follow on work in differential privacy appears in DJoin~\cite{Narayan2012}, where the system supports private execution of certain SQL operators over a federated database with strong privacy guarantees and noisy results. \sysname supports a larger set of database operators for execution and instead of using noisy results to safeguard data, \sysname uses noisy cardinalities to improve performance.

He et al.~\cite{He2017} applied computational differential privacy in join operators for private record linkage and proposed a three desiderata approach to operator execution: precision, provability, and performance. \sysname incorporates this style of join execution and approach to execution trade-offs.

Pappas et al.~\cite{Pappas2014} showed that by trading small bits of privacy for performance within provable bounds using bloom filters, systems can provide scalable DBMS queries over arbitrary boolean queries. \sysname applies this pattern of provable privacy versus performance trade-offs to the larger set of non-boolean arbitrary SQL queries.

Both Opaque~\cite{Zheng2017} and Hermetic~\cite{Xu2018} use cost models to estimate performance slowdowns due to privacy-preserving computation. In both cases, they use secure enclaves to carry out private computation, which provides constant-cost I$/$O. As such, their cost models cannot account for the variable-cost I$/$O present in \sysname.

\section{Conclusions and Future Work}
\label{sec:conclusions}

In this work, we introduced \sysname, a protocol-agnostic system for differentially-private query processing that significantly improves the performance of {\federation}s. We use a computational differential privacy mechanism to selectively leak statistical information within provable privacy bounds and reduce the size of intermediates. We introduce a novel cost model and privacy budget optimizer to maximize the privacy, performance, and accuracy trade-off. We integrate \sysname into existing {\federation} architecture and collect results using real-world medical data.

In future research, we hope to further improve the performance of {\federation}s and the efficiency of \sysname. We are currently investigating novel secure array algorithms and data structures to improve I$/$O access time, privacy budget optimizations over multiple queries, and extensions of {\federation}s using additional \mpc protocols and cryptographic primitives.  

\stitle{Acknowledgments:}
We thank Abel Kho, Satyender Goel, Katie Jackson, Jess Joseph Behrens, and the HealthLNK team for their guidance and assistance with CAPriCORN and HealthLNK data. 


\bibliographystyle{abbrv}
{\scriptsize \bibliography{smcql-dp,ref} }

\begin{thebibliography}{10}

\bibitem{Arasu2013}
A.~Arasu, S.~Blanas, K.~Eguro, R.~Kaushik, D.~Kossmann, R.~Ramamurthy, and
  R.~Venkatesan.
\newblock {Orthogonal Security with Cipherbase}.
\newblock {\em CIDR 2013, Sixth Biennial Conference on Innovative Data Systems
  Research, Asilomar, CA, USA, January 6-9, 2013, Online Proceedings}, 2013.

\bibitem{Arasu2014}
A.~Arasu and R.~Kaushik.
\newblock {Oblivious Query Processing}.
\newblock {\em Icdt}, pages 26--37, 2014.

\bibitem{Asharov2013}
G.~Asharov and Y.~Lindell.
\newblock {The BGW protocol for perfectly-secure multiparty computation}.
\newblock {\em Cryptology and Information Security Series}, 10(189):120--167,
  2013.

\bibitem{Bater2017}
J.~Bater, G.~Elliott, C.~Eggen, S.~Goel, A.~Kho, and J.~Rogers.
\newblock {SMCQL}.
\newblock {\em Proceedings of the VLDB Endowment}, 10(6):673--684, feb 2017.

\bibitem{Beaver1990}
D.~Beaver, S.~Micali, and P.~Rogaway.
\newblock {The Round Complexity of Secure Protocols}.
\newblock {\em ACM Symposium on Theory of Computing STOC}, pages 503--513,
  1990.

\bibitem{SP:BHKR13}
M.~Bellare, V.~T. Hoang, S.~Keelveedhi, and P.~Rogaway.
\newblock Efficient garbling from a fixed-key blockcipher.
\newblock In {\em S\&P}, pages 478--492. IEEE, 2013.

\bibitem{taoInference18}
V.~Bindschaedler, P.~Grubbs, D.~Cash, T.~Ristenpart, and V.~Shmatikov.
\newblock The tao of inference in privacy-protected databases.
\newblock {\em PVLDB}, 11(11), 2018.

\bibitem{bogdanov2014privacy}
D.~Bogdanov, L.~Kamm, S.~Laur, P.~Pruulmann-Vengerfeldt, R.~Talviste, and
  J.~Willemson.
\newblock {Privacy-preserving statistical data analysis on federated
  databases}.
\newblock In {\em Annual Privacy Forum}, pages 30--55. Springer, 2014.

\bibitem{Chen2018}
G.~Chen, S.~Chen, Y.~Xiao, Y.~Zhang, Z.~Lin, and T.~H. Lai.
\newblock {SgxPectre Attacks: Leaking Enclave Secrets via Speculative
  Execution}.
\newblock 2018.

\bibitem{chow2009two}
S.~S. Chow, J.-H. Lee, and L.~Subramanian.
\newblock Two-party computation model for privacy-preserving queries over
  distributed databases.
\newblock In {\em NDSS}, 2009.

\bibitem{Clifton2002}
C.~Clifton, M.~Kantarcioglu, J.~Vaidya, X.~Lin, and {and M. Zhu}.
\newblock {Tools for Privacy Preserving Distributed Data Mining}.
\newblock {\em ACM SIGKDD Explorations}, 4(2), 2002.

\bibitem{Doerner2017}
J.~Doerner and A.~Shelat.
\newblock {Scaling ORAM for Secure Computation}.
\newblock {\em Proceedings of the 2017 ACM SIGSAC Conference on Computer and
  Communications Security - CCS '17}, pages 523--535, 2017.

\bibitem{Dwork2006a}
C.~Dwork.
\newblock {Differential privacy}.
\newblock {\em Proceedings of the 33rd International Colloquium on Automata,
  Languages and Programming}, pages 1--12, 2006.

\bibitem{pdtextbook14}
C.~Dwork and A.~Roth.
\newblock The algorithmic foundations of differential privacy.
\newblock {\em Found. Trends Theor. Comput. Sci.}, 2014.

\bibitem{safeHarbor}
K.~E. Emam.
\newblock Heuristics for de-identifying health data.
\newblock {\em IEEE Security {\&} Privacy}, 6(4):58--61, 2008.

\bibitem{erlingsson2014rappor}
{\'U}.~Erlingsson, V.~Pihur, and A.~Korolova.
\newblock Rappor: Randomized aggregatable privacy-preserving ordinal response.
\newblock In {\em Proceedings of the 2014 ACM SIGSAC conference on computer and
  communications security}, pages 1054--1067. ACM, 2014.

\bibitem{Ferraiuolo2017}
A.~Ferraiuolo, R.~Xu, D.~Zhang, A.~C. Myers, and G.~E. Suh.
\newblock {Verification of a Practical Hardware Security Architecture Through
  Static Information Flow Analysis}.
\newblock {\em ACM SIGARCH Computer Architecture News}, 45(1):555--568, 2017.

\bibitem{Goldreich:oram}
O.~Goldreich.
\newblock Towards a theory of software protection and simulation by oblivious
  rams.
\newblock In {\em STOC}, pages 182--194, New York, NY, USA, 1987. ACM.

\bibitem{Goldreich:2004:FCV:975541}
O.~Goldreich.
\newblock {\em Foundations of Cryptography: Volume 2, Basic Applications}.
\newblock Cambridge University Press, New York, NY, USA, 2004.

\bibitem{Goldreich1987}
O.~Goldreich, S.~Micali, and A.~Wigderson.
\newblock {How to Play Any Mental Game}.
\newblock {\em Stoc '87}, pages 218--229, 1987.

\bibitem{goodrich2011privacy}
M.~T. Goodrich and M.~Mitzenmacher.
\newblock {Privacy-preserving access of outsourced data via oblivious RAM
  simulation}.
\newblock In {\em ICALP}, 2011.

\bibitem{He2017}
X.~He, A.~Machanavajjhala, C.~Flynn, and D.~Srivastava.
\newblock {Composing Differential Privacy and Secure Computation: A case study
  on scaling private record linkage}.
\newblock (1), 2017.

\bibitem{hernandez2015adaptable}
A.~F. Hernandez, R.~L. Fleurence, and R.~L. Rothman.
\newblock {The ADAPTABLE Trial and PCORnet: shining light on a new research
  paradigm}.
\newblock {\em Annals of internal medicine}, 163(8):635--636, 2015.

\bibitem{USS:HEKM11}
Y.~Huang, D.~Evans, J.~Katz, and L.~Malka.
\newblock Faster secure two-party computation using garbled circuits.
\newblock In {\em USENIX Security 2011}, 2011.

\bibitem{C:IKNP03}
Y.~Ishai, J.~Kilian, K.~Nissim, and E.~Petrank.
\newblock Extending oblivious transfers efficiently.
\newblock In {\em CRYPTO}, LNCS, pages 145--161, 2003.

\bibitem{Islam2012}
M.~S. Islam, M.~Kuzu, and M.~Kantarcioglu.
\newblock {Access pattern disclosure on searchable encryption: Ramification,
  attack and mitigation}.
\newblock {\em Ndss}, 20:1--15, 2012.

\bibitem{practicaldp18}
N.~Johnson, J.~P. Near, and D.~Song.
\newblock Towards practical differential privacy for sql queries.
\newblock In {\em VLDB}, 2018.

\bibitem{Kocher2018}
P.~Kocher, D.~Genkin, D.~Gruss, W.~Haas, M.~Hamburg, M.~Lipp, S.~Mangard,
  T.~Prescher, M.~Schwarz, and Y.~Yarom.
\newblock {Spectre Attacks: Exploiting Speculative Execution}.
\newblock 2018.

\bibitem{Laur2013}
S.~Laur, R.~Talviste, and J.~Willemson.
\newblock {From oblivious AES to efficient and secure database join in the
  multiparty setting}.
\newblock {\em Lecture Notes in Computer Science}, 7954 LNCS:84--101, 2013.

\bibitem{Laur2011}
S.~Laur, J.~Willemson, and B.~Zhang.
\newblock {Round-efficient Oblivious Database Manipulation}.
\newblock {\em ISC'11}, pages 262--277, 2011.

\bibitem{Lipp2016}
M.~Lipp, D.~Gruss, R.~Spreitzer, C.~Maurice, S.~Mangard, M.~Lipp, D.~Gruss,
  R.~Spreitzer, and S.~Mangard.
\newblock {ARMageddon : Cache Attacks on Mobile Devices This paper is included
  in the Proceedings of the ARMageddon : Cache Attacks on Mobile Devices}.
\newblock {\em USENIX Security}, pages 549--564, 2016.

\bibitem{Liu2014}
C.~Liu, Y.~Huang, E.~Shi, J.~Katz, and M.~Hicks.
\newblock {Automating Efficient RAM-Model Secure Computation}.
\newblock {\em 2014 IEEE Symposium on Security and Privacy}, pages 623--638,
  2014.

\bibitem{Liu2015}
C.~Liu, X.~S. Wang, K.~Nayak, Y.~Huang, and E.~Shi.
\newblock {ObliVM : A Programming Framework for Secure Computation}.
\newblock {\em Oakland}, pages 359--376, 2015.

\bibitem{onthemap08}
A.~Machanavajjhala, D.~Kifer, J.~M. Abowd, J.~Gehrke, and L.~Vilhuber.
\newblock Privacy: Theory meets practice on the map.
\newblock In {\em ICDE}, 2008.

\bibitem{McKenna:2018:OEH:3231751.3242939}
R.~McKenna, G.~Miklau, M.~Hay, and A.~Machanavajjhala.
\newblock Optimizing error of high-dimensional statistical queries under
  differential privacy.
\newblock {\em Proc. VLDB Endow.}, 2018.

\bibitem{mcsherry2009privacy}
F.~D. McSherry.
\newblock Privacy integrated queries: an extensible platform for
  privacy-preserving data analysis.
\newblock In {\em 2009 ACM SIGMOD}, pages 19--30. ACM, 2009.

\bibitem{Mironov:2009:CDP:1615970.1615981}
I.~Mironov, O.~Pandey, O.~Reingold, and S.~Vadhan.
\newblock Computational differential privacy.
\newblock In {\em CRYPTO}, 2009.

\bibitem{Narayan2012}
A.~Narayan and A.~Haeberlen.
\newblock {DJoin: differentially private join queries over distributed
  databases}.
\newblock {\em Proceedings of the 10th USENIX Symposium {\ldots}}, page~14,
  2012.

\bibitem{naveed2015inference}
M.~Naveed, C.~V. Wright, S.~Kamara, and C.~V. Wright.
\newblock {Inference Attacks on Property-Preserving Encrypted Databases}.
\newblock In {\em Proceedings of the 22nd ACM SIGSAC Conference on Computer and
  Communications Security}, pages 644--655. ACM, 2015.

\bibitem{Nayak}
K.~Nayak, X.~S. Wang, S.~Ioannidis, U.~Weinsberg, N.~Taft, and E.~Shi.
\newblock {GraphSC : Parallel Secure Computation Made Easy}.

\bibitem{nikolaenko2013privacy}
V.~Nikolaenko, S.~Ioannidis, U.~Weinsberg, M.~Joye, N.~Taft, and D.~Boneh.
\newblock {Privacy-preserving matrix factorization}.
\newblock In {\em Proceedings of the 2013 ACM SIGSAC conference on Computer
  {\&} communications security}, pages 801--812. ACM, 2013.

\bibitem{Pappas2014}
V.~Pappas, F.~Krell, B.~Vo, V.~Kolesnikov, T.~Malkin, S.~G. Choi, W.~George,
  A.~Keromytis, and S.~Bellovin.
\newblock {Blind seer: A scalable private DBMS}.
\newblock {\em Proceedings - IEEE Symposium on Security and Privacy}, pages
  359--374, 2014.

\bibitem{cdiff2015irb}
PCORI.
\newblock {Characterizing the Effects of Recurrent Clostridium Difficile
  Infection on Patients}.
\newblock {\em IRB Protocol}, ORA: 14122, 2015.

\bibitem{healthlnkirb}
PCORI.
\newblock {Exchanging de-identified data between hospitals for city-wide health
  analysis in the Chicago Area HealthLNK data repository (HDR)}.
\newblock {\em IRB Protocol}, 2015.

\bibitem{Popa2011}
R.~Popa and C.~Redfield.
\newblock {CryptDB}: protecting confidentiality with encrypted query
  processing.
\newblock {\em SOSP}, pages 85--100, 2011.

\bibitem{Saia2014}
J.~Saia and M.~Zamani.
\newblock {Recent Results in Scalable Multi-Party Computation}.
\newblock 2014.

\bibitem{selinger1979access}
P.~G. Selinger, M.~M. Astrahan, D.~D. Chamberlin, R.~A. Lorie, and T.~G. Price.
\newblock Access path selection in a relational database management system.
\newblock In {\em Proceedings of the 1979 ACM SIGMOD international conference
  on Management of data}, pages 23--34. ACM, 1979.

\bibitem{Vadhan2017}
S.~Vadhan.
\newblock {\em The Complexity of Differential Privacy}.
\newblock Springer International Publishing, 2017.

\bibitem{VanBulck2018}
J.~{Van Bulck}, M.~Minkin, O.~Weisse, D.~Genkin, B.~Kasikci, F.~Piessens,
  M.~Silberstein, T.~F. Wenisch, Y.~Yarom, R.~Strackx, and K.~Leuven.
\newblock {Foreshadow: Extracting the Keys to the Intel SGX Kingdom with
  Transient Out-of-Order Execution}.
\newblock {\em Proceedings of the 27th USENIX Security Symposium}, pages
  991--1008, 2018.

\bibitem{conclave}
N.~Volgushev, M.~Schwarzkopf, B.~Getchell, A.~Lapets, M.~Varia, and
  A.~Bestavros.
\newblock {Conclave Workflow Manager for MPC}, 2018.

\bibitem{emp-toolkit}
X.~Wang, A.~J. Malozemoff, and J.~Katz.
\newblock {EMP-Toolkit: Efficient Multiparty Computation Toolkit}.
\newblock https://github.com/emp-toolkit, 2016.

\bibitem{Wang2017}
X.~Wang, A.~J. Malozemoff, and J.~Katz.
\newblock {Faster secure two-party computation in the single-execution
  setting}.
\newblock {\em Lecture Notes in Computer Science (including subseries Lecture
  Notes in Artificial Intelligence and Lecture Notes in Bioinformatics)}, 10212
  LNCS:399--424, 2017.

\bibitem{Wang2014}
X.~S. Wang, K.~Nayak, C.~Liu, T.-H.~H. Chan, E.~Shi, E.~Stefanov, and Y.~Huang.
\newblock {Oblivious Data Structures}.
\newblock {\em Proceedings of the 2014 ACM SIGSAC Conference on Computer and
  Communications Security - CCS '14}, pages 215--226, 2014.

\bibitem{Xu2018}
M.~Xu, A.~Papadimitriou, A.~Feldman, and A.~Haeberlen.
\newblock {Using Differential Privacy to Efficiently Mitigate Side Channels in
  Distributed Analytics}.
\newblock In {\em Proceedings of the 11th European Workshop on Systems Security
  - EuroSec'18}, pages 1--6, New York, New York, USA, 2018. ACM Press.

\bibitem{Yao1982}
A.~C. Yao.
\newblock {Protocols for secure computations}.
\newblock {\em FoCS}, pages 1--5, 1982.

\bibitem{Zheng2017}
W.~Zheng, A.~Dave, J.~G. Beekman, R.~A. Popa, J.~E. Gonzalez, and I.~Stoica.
\newblock {Opaque: An Oblivious and Encrypted Distributed Analytics Platform}.
\newblock {\em 14th {\{}USENIX{\}} Symposium on Networked Systems Design and
  Implementation, {\{}NSDI{\}} 2017, Boston, MA, USA, March 27-29, 2017}, pages
  283--298, 2017.

\end{thebibliography}


\end{document}